\newtheorem{theorem}{Theorem}
\newtheorem{lemma}[theorem]{Lemma}
\newtheorem{proposition}[theorem]{Proposition}
\newtheorem{remark}[theorem]{Remark}
\newenvironment{proof}[1][Proof]{\noindent \textbf{#1.} }{\  \rule{0.5em}{0.5em}}
\newcommand{\bq}{\begin{equation*}}
\newcommand{\eq}{\end{equation*}}
\newcommand{\bqn}{\begin{equation}}
\newcommand{\eqn}{\end{equation}}
\newcommand{\bqq}{\begin{eqnarray*}}
\newcommand{\eqq}{\end{eqnarray*}}
\newcommand{\bqqn}{\begin{eqnarray}}
\newcommand{\eqqn}{\end{eqnarray}}
\DeclareMathOperator*{\esssup}{ess\;sup}
\newcommand{\ep}{\epsilon}
\newcommand{\p}{\partial}
\newcommand{\be}{\begin{eqnarray}}
\newcommand{\ee}{\end{eqnarray}}
\newcommand{\bee}{\begin{eqnarray*}}
\newcommand{\eee}{\end{eqnarray*}}
\begin{document}

\title{Analysis of the optimal exercise boundary of American put options with delivery lags%
\thanks{\textcolor[rgb]{1.00,0.00,0.00}{We are grateful to the anonymous referee whose comments and
suggestions have significantly improved our paper, especially
Theorem 1.1(i). We thank Mihail Zervos for introducing this topic to
us. We also thank Tiziano De Angelis for sharing with us his
insights (and a detailed proof) on the large time behavior of free
boundaries using probabilistic arguments.} This work is supported by
NNSF of China (Grant No. 11771158, 11801091) and Guangdong Basic and
Applied Basic Research Foundation (Grant No.2019A1515011338).}}
\author{Gechun Liang\thanks{%
Department of Statistics, University of Warwick, Coventry, CV4 7AL,
U.K.; email: \texttt{%
g.liang@warwick.ac.uk}}
\and Zhou Yang\thanks{%
School of Mathematical Sciences, South China Normal
 University, Guangzhou 510631, China; email: \texttt{%
yangzhou@scnu.edu.cn}}}
\date{}
\maketitle

\begin{abstract}
A make-your-mind-up option is an American derivative with delivery lags. We show that its put option can be
decomposed as a European put and a new type of American-style
derivative. The latter is an option for which the investor receives
the Greek Theta of the corresponding European option as the running
payoff, and decides an optimal stopping time to terminate the
contract. Based on this decomposition and using  free boundary techniques, we show that the
associated optimal exercise boundary exists and is a strictly
increasing and smooth curve, and analyze the asymptotic behavior of the value function and the optimal exercise boundary
for both large
maturity and small time lag .\\

\noindent\textit{Keywords}:
Make-your-mind-up option; early exercise premium decomposition;
optimal exercise boundary; free boundary; asymptotic behavior.\\

\noindent\textit{Mathematics Subject Classification (2010)}: 60G40,
\and 91A05, \and 91G80, \and 93E20.
\end{abstract}



\section{Introduction}

With a few exceptions, models of optimal stopping time problems
assume that the player is able to terminate the underlying
stochastic dynamics immediately after the decision to stop, or to
bring a new project online without any delays after the decision to
invest. In fact, both stopping stochastic dynamics and initiating a
new project take time.

In this paper, we consider a class of optimal
stopping problems where there exists a time lag between the
player's decision time and the time that the payoff is delivered. In particular, we study American put options with delivery lags in details. In
practice, there may exist a time lag between the time that the
option holder decides to exercise the option and the time that the
payoff is delivered. Such delivery lags may be specified in
financial contracts, where the decision to exercise must be made
before the exercise takes place. They are called
\emph{make-your-mind-up options} (see Chapter 6 of \cite{Jiang} and
Chapter 9 of \cite{Wilmott}). For example, the option holder must
give a notice period before she exercises, and she cannot change her
mind. On the other hand, even for a standard American derivative,
the option holder may not be able to exercise it immediately, when
there exist liquidation constraints in financial markets.

Let $W$ be a one-dimensional Brownian motion on a complete probability space
$(\Omega,\mathcal{F},\mathbf{P})$. Denote by
$\mathbb{F}=\{\mathcal{F}_t\}_{t\geq 0}$ the augmented filtration
generated by $W$. Let a constant $T>0$ represent the maturity and
another constant $\delta\in[0,T)$ represent the time lag. The player aims to choose an optimal stopping time
$\tau^{0,*}\in\mathcal{R}_t^{0}$ in order to maximize {the discounted expected payoff}
\begin{align}\label{optimal_stopping_delay_special_3}
Y_t^{\delta}= \esssup_{\tau^{0}\in\mathcal{R}_t^{0}}
&\mathbf{E}\left[e^{-r(\tau^{0}+\delta-t)}(K-X_{\tau^{0}+\delta})^+\mathbf{1}_{\{\tau^{0}+\delta<
T\}}\right.\notag\\
&\ \ \ \left.+\ e^{-r(T-t)}(K-X_{T})^+\mathbf{1}_{\{\tau^{0}+\delta\geq
T\}}|\mathcal{F}_t\right],\quad t\in[0,T],
\end{align}
where $Y^{\delta}$ represents the value process, the $\mathbb{F}$-adapted process $X$ models the stock price, the constant $K>0$ denotes the strike price, and
$${\mathcal{R}_t^{0}:=\{\tau^{0}:\Omega\rightarrow[t,T],\
\text{and}\  \{\tau^{0}\leq s\}\in\mathcal{F}_{s}\ \text{for\ any}\
s\in[t,T]\}.}$$

Note that $\delta=0$ corresponds to the classical optimal stopping
problem for American options (see, for example, \cite{Detemple} and \cite{Peskir})\textcolor{red}{, so we mainly focus on the problem in the case of $\delta>0$ in this paper}. For $\delta>0$, if the player decides to stop at some
stopping time $\tau^{0}$, then the payoff will be delivered at
$\tau^0+\delta$ rather than $\tau^0$, so there is a time lag of the
delivery of the payoff. We also observe that the problem (\ref{optimal_stopping_delay_special_3}) is trivial for
$t\in(T-\delta,T]$ for, in this situation, the expected payoff is independent of choice of $\tau^0$, and the player
may simply choose the optimal stopping time as the maturity $T$.
\emph{Thus, we focus on the case $t\in[0,T-\delta]$ throughout the paper.}

Although this type of optimal stopping problems with delivery lags
have been well studied in the literature (see \cite{Keppo} and
\cite{Oksendal} with more references therein), little is known about
the corresponding optimal exercise boundaries and their asymptotic
behavior for small time lag $\delta$ and large maturity $T$.
Intuitively, both the value function and the corresponding optimal
exercise boundary (if exists) will converge to the solution for the
case without delivery lags when $\delta\downarrow 0$, and to the
solution for the perpetual case when $T\uparrow \infty$. It is the
aim of this paper to prove the above asymptotic behavior using free
boundary techniques.

To be more specific, under the geometric Brownian motion setup, we prove the following result.

\begin{theorem}\label{main_theorem}
Suppose that the stock price $X$ follows
$$dX_s/X_s=(r-q)ds+\sigma dW_s,\quad X_t=X,$$
where the interest rate $r>0$, the dividend rate
\textcolor{red}{$q\in[0,r]$}\footnote{\textcolor{red}{$q\leq r$ is a
technique assumption, which ensures conclusion (i) in Proposition
4.}} and the volatility $\sigma>0$ are all constants. Then, the
following assertions hold:

(i) The value $Y^{\delta}_t=V^\delta(t,X_t)$ is decreasing with respect to
 $\delta$ and, moreover,
 \begin{equation}\label{bound}
V^{0}(t,X)\geq V^{\delta}(t,X)\geq
V^{0}(t,X)-\textcolor{red}{K\left(1-e^{-r\delta}\right),\quad
t\in[0,T-\delta]},
 \end{equation}
where $V^{\delta}(\cdot,\cdot)$ and $V^0(\cdot,\cdot)$ represent the
value function for the American put with and without delivery lags,
respectively. \textcolor[rgb]{1.00,0.00,0.00}{In addition,
$V^\delta(t,X)$ is decreasing with respect to $t$.}

(ii) There exists an optimal exercise boundary $X^{\delta}(t)\in C^\infty[0,T-\delta)$ separating exercise and continuation regions (cf. (\ref{relation}) and (\ref{freeboundary_10}))). Moreover, it is strictly increasing in $t$, with the end point $$X^{\delta}(T-\delta)=\lim\limits_{t\rightarrow(T-\delta)^{-}}X^{\delta}(t)=Ke^{\overline{X}},$$
where $\overline{X}$ is given in Proposition \ref{Pro3}.

(iii) The optimal exercise boundary $X^{\delta}(t)\rightarrow K e^{\underline{X}}$ as
$T\rightarrow\infty$ with $\underline{X}$ given in (\ref{freeboundary_0}), so $Ke^{\underline{X}}$ is the asymptotic line
of the optimal exercise boundary $X^{\delta}(t)$. Moreover,
$X^\delta(t)\rightarrow X^0(t)$ for any
$t\in[0,T)$
 as $\delta\rightarrow 0$, where $X^0(t)$ represents the optimal exercise boundary for the corresponding American put without delivery lags.
\end{theorem}

To prove Theorem \ref{main_theorem}, we need to first solve the optimal stopping problem (\ref{optimal_stopping_delay_special_3}). A basic idea is to introduce a new
obstacle (payoff) process, which is the projection (conditional expectation)
of the original expected payoff. For $t\in[0,T-\delta]$,
define  \begin{equation}\label{european}
\widehat{Y}_t^{\delta}=\mathbf{E}\left[e^{-r\delta}(K-X_{t+\delta})^+|\mathcal{F}_t\right],
\end{equation}
which is the time $t$ value of the corresponding European put option
with maturity $t+\delta$. Denote by $P(\cdot,\cdot)$ the value
function of the European put option with maturity $T$. Then, the time homogeneity of (\ref{european}) implies
$\hat{Y}_t^{\delta}=P(T-\delta, X_t)$, and the tower property of conditional expectations further yields
\begin{equation}\label{new_problem}
Y_t^{\delta}=V^{\delta}(t,X_t)=\textcolor{red}{\esssup_{\tau^{0}\in\mathcal{R}_t^{0}}
\mathbf{E}\left[e^{-r((\tau_0\wedge(T-\delta))-t)}P(T-\delta,X_{\tau_0\wedge(T-\delta)})|\mathcal{F}_t\right]}
\end{equation}
\textcolor{red}{with $x\wedge y=\min(x,y)$ and $t\in[0,T-\delta]$.}
Hence, we have transformed the original problem
(\ref{optimal_stopping_delay_special_3}) to a standard optimal
stopping problem (without delivery lags) with the European option
price as the new obstacle process. The rest of the paper will
therefore focus on (\ref{new_problem}) and its corresponding
variational inequality (\ref{VI11}) in section \ref{sec:american}.

The existing literature of optimal stopping with delivery lags
(see \cite{Keppo} and \cite{Oksendal} for example) usually assumes that
the payoff is a linear function of the underlying asset $X$, which certainly excludes the American payoff.
A consequence of this simplified assumption is that the new obstacle
$\hat{Y}^{\delta}$ is also linear in $X$, which follows from the
linearity of the conditional expectation, and the obstacle function in the
variational inequality is therefore also a linear function. Hence, the treatments of the optimal stopping problems with and
without delivery lags are essentially the same in their models.

In our case, since the American payoff is only a piecewise linear function of the
underlying asset $X$ (with a kink point at $K$), this kink point
propagates via the conditional expectation, resulting in a nonlinear
obstacle function $P(T-\delta,\cdot)$. This differentiates our problem from
the existing optimal stopping problems with delivery lags, and makes
the analysis of the corresponding optimal exercise boundary much
more challenging.



We first develop an early exercise premium decomposition
formula for the American put option with delivery lags (see (\ref{Decomposition})). This helps us
overcome the difficulty of handling the European option price as the
modified payoff. We show that an American put option with delivery
lags can be decomposed as a European put option and another
American-style derivative as an auxiliary optimal stopping problem (see (\ref{optimal_stopping_delay_special_4})). The latter is an option for which the
investor receives the Greek Theta of the corresponding European
option as the running payoff, and decides an optimal stopping time
to terminate the contract. The
decomposition formula (\ref{Decomposition}) can also be regarded as
a counterpart of the early exercise premium representation of
standard American options, and is crucial to the analysis of the
associated optimal exercise boundary.

Using free-boundary techniques, we then give a detailed analysis of
the associated optimal exercise boundary. An essential difficulty
herein is \textcolor{red}{the non-monotonicity of the difference between the value function and the payoff} with respect to
the stock price (a similar phenomenon also appears in \cite{Dai}).
As a result, it is not even clear \emph{ex ante} whether the optimal
exercise boundary exists or not. This is in contrast to standard
American options, for which the value function, subtracted by the
payoff, is monotonic with respect to the stock price, so the
stopping and continuation regions can be easily separated.

Thanks to the auxiliary optimal stopping problem (\ref{optimal_stopping_delay_special_4}) and its associated variational inequality (\ref{VI1}), we prove that the optimal exercise
boundary exists and is a strictly increasing and smooth curve, with
its end point closely related to the zero crossing point of the
Greek Theta of the corresponding European option. Intuitively, when Theta is positive, the running payoff
of the new American-style derivative is also positive, so the
investor will hold the option to receive the positive Theta
continuously. In contrast, when Theta is negative, one may think
that the investor would then exercise the option to stop her losses.
However, we show that when Theta is negative but not too small, the
investor may still hold the option and wait for Theta to rally at a
later time to recover her previous losses. We further quantify such
negative values of Theta by identifying the asymptotic line of the
optimal exercise boundary, which turns out to be the optimal
exercise boundary of the corresponding perpetual problem.

The paper is organized as follows. In section
\ref{sec:american}, we prove Theorem \ref{main_theorem} (i) and introduce the early excise premium decomposition formula. We then consider the corresponding perpetual problem in section \ref{sec:perpetual}, and in section \ref{sec:finite_horizon} we prove Theorem \ref{main_theorem} (ii) and (iii). Some technical
proofs about the property of the Greek Theta are provided in the appendix.


\section{The variational inequality characteriation}\label{sec:american}
We first solve the optimal stopping problem (\ref{new_problem}) via its associated variational inequality
\begin{eqnarray}\label{VI11}
 \left\{
 \begin{array}{ll}
 (-\p_t-{{\cal L}})V^{\delta}(t,X)=0,&\mbox{if}\; V^{\delta}(t,X)> P(T-\delta,X),\\
  &\mbox{for}\;(t,X)\in{\Omega}_{T-\delta};
 \vspace{2mm} \\
 (-\p_t-{{\cal L}})V^{\delta}(t,X)\geq 0,&\mbox{if}\; V^{\delta}(t,X)=P(T-\delta,X),\\
 &\mbox{for}\;(t,X)\in{\Omega}_{T-\delta};
 \vspace{2mm} \\
 V^{\delta}(T-\delta,X)=P(T-\delta,X),
 &\mbox{for}\;X\in\mathbb{R}_+,
 \end{array}
 \right.
\end{eqnarray}
with ${\Omega}_{T-\delta}= [0,T-\delta\,)\times\mathbb{R}_+$, and
the operator $\mathcal{L}$ given by the Black-Scholes differential
operator
$${\cal L}  ={1\over2}\,\sigma^2X^2\p_{XX}
 +(r-q)X\p_X-r. $$

Note that if $\delta=0$, $P(T-\delta,X)=(K-X)^+$, and variational
inequality (\ref{VI11}) reduces to the standard variational
inequality for American put options. On the other hand, since variational inequality (\ref{VI11}) is with
smooth coefficients and obstacle, its (strong) solution
$V^\delta(\cdot,\cdot)$ characterizes the value function and the optimal stopping rule for the optimal stopping problem (\ref{new_problem}).

\begin{proposition}\label{regularity1}
The value function $V^{\delta}(\cdot,\cdot)$ of the optimal stopping problem (\ref{new_problem}) is the unique
bounded strong solution to variational inequality~\eqref{VI11}, and the optimal stopping rule is given by
\begin{equation}\label{optimal_stopping_rule}
\textcolor[rgb]{1.00,0.00,0.00}{\tau^{0,*}=\inf\{s\in[t,T-\delta]:
V^{\delta}(s,X_s)=P(T-\delta, X_s)\}\wedge T.}
\end{equation}
Moreover, $V^\delta\in W^{2,1}_{p,loc}(\Omega_{T-\delta})\cap
C(\overline{\Omega_{T-\delta}})$ for any $p\geq 1$, and $\p_x
V^\delta\in C(\overline{\Omega_{T-\delta}})$.

Herein, $W^{2,1}_{p,loc}(\Omega_{T-\delta})$ is the set of all
functions whose restriction on any compact subset
$\Omega_{T-\delta}^*\subset\Omega_{T-\delta}$ belong to
$W^{2,1}_p(\Omega_{T-\delta}^{*})$, where
$W^{2,1}_p(\Omega_{T-\delta}^{*})$ is the completion of
$C^{\infty}(\Omega_{T-\delta}^{*})$ under the norm
$$||V^\delta||_{W^{2,1}_p(\Omega_{T-\delta}^{*})}=
\left[\int_{\Omega_{T-\delta}^*}(|V^\delta|^p+|\partial_tV^\delta|^p+|\partial_xV^\delta|^p+|\partial_{xx}V^\delta|^{p})dxdt\right]^{\frac{1}{p}}
.$$
\end{proposition}

The proof follows along the similar arguments used in
Chapter 1 of \cite{Friedman2}, or more recently \cite{Yan2}, and is thus omitted.

\subsection{Proof of Theorem \ref{main_theorem} (i)}

\textcolor{red}{In this subsection, we prove Theorem
\ref{main_theorem} (i).\footnote{We thank the referee for outlining
the current probabilistic proof for us. Note that the proof does not
require the geometric Brownian motion model of the underlying asset,
which is more general than our original proof based on PDE
arguments.} Note that the arguments below do not rely on the
geometric Brownian motion assumption on $X$, as long as its
discounted price $e^{-(r-q)t}X_t$ is a martingale.}

\textcolor{red}{We first prove the monotone property of
$V^{\delta}(\cdot,\cdot)$ with respect to $\delta$. Fix $0\leq
\delta_1<\delta_2$. For any $\tau_2\in \mathcal{R}_t^{0}$, take
$\tau_1=(\tau_2+(\delta_2-\delta_1))\wedge T$. Since
$\delta_1,\delta_2,T$ are constants, we know that $\tau_1\in
\mathcal{R}_t^{0}$. Moreover, it is easy to check that
$\{\tau_1+\delta_1\geq T\}=\{\tau_2+\delta_2\geq T\}$ and
$$
 e^{-r(\tau_1+\delta_1-t)}(K-X_{\tau_1+\delta_1})^+\mathbf{1}_{\{\tau_1+\delta_1<
T\}}=e^{-r(\tau_2+\delta_2-t)}(K-X_{\tau_2+\delta_2})^+\mathbf{1}_{\{\tau_2+\delta_2<
T\}}.
$$
Hence, from~\eqref{optimal_stopping_delay_special_3}, we know that $Y_t^{\delta_1}\geq Y_t^{\delta_2}$ and $V^{\delta}(\cdot,\cdot)$ is decreasing with respect to $\delta$. }

\textcolor{red}{For the second inequality in~\eqref{bound}, for any
$\tau\in \mathcal{R}_t^{0}$, take
$\widehat{\tau}=(\tau+\delta)\wedge T$. Note that $\widehat{\tau}\in
{\cal F}_{\tau}$ and $\widetilde{X}_t=e^{-(r-q)t}X_t$ is a
martingale. Hence,
    \bee
      &&\mathbf{E}\left(e^{-r\widehat{\tau}}\left(K-X_{\widehat{\tau}}\right)^+\Big|{\cal F}_t\right)
      =\mathbf{E}\left(e^{-r\widehat{\tau}}\left(K-e^{(r-q)\widehat{\tau}}\widetilde{X}_{\widehat{\tau}}\right)^+\Big|{\cal F}_t\right)
      \\[2mm]
      &\geq&\mathbf{E}\left(e^{-r\widehat{\tau}}\left(K-e^{(r-q)\widehat{\tau}}\widetilde{X}_{\tau}\right)^+\Big|{\cal F}_t\right)
      \geq\mathbf{E}\left(\left(e^{-r\widehat{\tau}}K-e^{-q\tau}\widetilde{X}_{\tau}\right)^+\Big|{\cal F}_t\right),
    \eee
where the first inequality follows from the facts that
$e^{-r\widehat{\tau}}\left(K-e^{(r-q)\widehat{\tau}}x\right)^+$ is
convex with respect to $x$ and measurable with respect to ${\cal
F}_\tau$, so we may take conditional expectation with respect to
${\cal F}_\tau$ and apply Jensen's inequality. For the second
inequality, we have used the facts that $q\geq0$ and
$\widehat{\tau}\geq\tau$. In turn,
    \bee
      &&\mathbf{E}\left(e^{-r\tau}\left(K-X_{\tau}\right)^+\Big|{\cal F}_t\right)
      \\[2mm]
      &=&\mathbf{E}\left(\left(e^{-r\widehat{\tau}}K-e^{-q\tau}\widetilde{X}_{\tau}\right)^+
      +\left[e^{-r\tau}\left(K-X_{\tau}\right)^+
      -\left(e^{-r\widehat{\tau}}K-e^{-r\tau}X_{\tau}\right)^+\right]\Big|{\cal F}_t\right)
      \\[2mm]
      &\leq&\mathbf{E}\left(e^{-r\widehat{\tau}}\left(K-X_{\widehat{\tau}}\right)^+\Big|{\cal F}_t\right)+K\mathbf{E}\left(e^{-r\tau}-e^{-r\widehat{\tau}}\Big|{\cal F}_t\right)
      \\[2mm]
      &\leq&\mathbf{E}\left(e^{-r((\tau+\delta)\wedge T)}\left(K-X_{(\tau+\delta)\wedge T}\right)^+\Big|{\cal F}_t\right)
      +K\left(1-e^{-r\delta}\right),
    \eee
where we have used the above conclusion and the fact that
$(x+y)^+-x^+\leq y^+$ in the first inequality, and $\tau\geq0$ and
$\widehat{\tau}-\tau\leq\delta$ in the second inequality. Until now,
we have proved that for any $\tau\in \mathcal{R}_t^{0}$, \bee
   &&\mathbf{E}\left[e^{-r(\tau-t)}(K-X_{\tau})^+\mathbf{1}_{\{\tau<
T\}}+\ e^{-r(T-t)}(K-X_{T})^+\mathbf{1}_{\{\tau\geq
T\}}|\mathcal{F}_t\right]
   \\[2mm]
   &\leq&\mathbf{E}\left[e^{-r(\tau+\delta-t)}(K-X_{\tau+\delta})^+\mathbf{1}_{\{\tau+\delta<
T\}}+\ e^{-r(T-t)}(K-X_{T})^+\mathbf{1}_{\{\tau+\delta\geq
T\}}|\mathcal{F}_t\right]
\\[2mm]
   & &+K\left(1-e^{-r\delta}\right).
\eee Thus, from~\eqref{optimal_stopping_delay_special_3}, we obtain
the second inequality in~\eqref{bound}.}

\textcolor{red}{Finally, we prove the following
inequality~\eqref{derivative with respect to t}, which is important
to analyze the properties of the optimal exercise boundary later on.
\begin{equation}\label{derivative with respect to t}
{\partial_{t}V^{\delta}\leq 0 \;\;\mbox{a.e.
in}\;\;\Omega_{T-\delta}.}
\end{equation}}
\textcolor{red}{By the Markov property and time homogeneity, it is
clear that \bee V^\delta(t,x)=
\esssup_{\tau^{0}\in\mathcal{R}_0^{0}}
\mathbf{E}\left[e^{-r((\tau^{0}+\delta)\wedge(T-t))}(K-X^{0,x}_{(\tau^{0}+\delta)\wedge(T-t)})^+|\mathcal{F}_t\right],\;\;
(t,x)\in \Omega_{T-\delta}, \eee where the notation $X^{0,x}$ means
the state process $X$ starts at the initial time $0$ and position
$x$}

\textcolor{red}{Let $0\leq t_1<t_2\leq T-\delta$. For any $\tau_2\in
\mathcal{R}_0^{0}$, take $\tau_1=\tau_2\wedge(T-\delta-t_2)$. It is
not difficult to check that $\tau_1\in \mathcal{R}_0^{0}$ and
$$
 (\tau_1+\delta)\wedge(T-t_1)
 =(\tau_2+\delta)\wedge(T-t_2).
$$
Thus, we deduce that $V^\delta(t_1,x)\geq V^\delta(t_2,x)$ and
$V^\delta$ is non-increasing with respect to $t$, which further
implies \eqref{derivative with respect to t}.}

\subsection{An early exercise premium decomposition formula}

We derive a decomposition formula for the American put option with
delivery lags.
Such a decomposition formula is crucial to the
analysis of the optimal exercise boundary in sections
\ref{sec:perpetual} and \ref{sec:finite_horizon}.
%
Let $U^{\delta}(t,X)=Y^{\delta}(t,X)-P(T-\delta,X)$. Then, we deduce that
$U^{\delta}(t,X)$ satisfies the variational
inequality \be \label{VI1}
 \left\{
 \begin{array}{ll}
 (-\p_t-{{\cal L}})U^{\delta}(t,X)=\Theta^{\delta}(X),\ &\mbox{if}\;U^{\delta}(t,X)>0,\;\mbox{for}\;(t,X)\in{\Omega}_{T-\delta};
 \vspace{2mm} \\
 (-\p_t-{{\cal L}})U^{\delta}(t,X)\geq \Theta^{\delta}(X),\ &\mbox{if}\;U^{\delta}(t,X)=0,\;\mbox{for}\;(t,X)\in{\Omega}_{T-\delta};
 \vspace{2mm}\\
 U^{\delta}(T-\delta,X)=0,\ &\mbox{for}\ X\in\mathbb{R}_+,
 \end{array}
 \right.
\ee
where $\Theta^{\delta}(\cdot)$ is the Greek
Theta of the European option:
$$\Theta^{\delta}(X)=-\partial_tP(T-\delta,X).$$ Interestingly, we observe that the above variational inequality (\ref{VI1}) also corresponds to an auxiliary optimal stopping problem
\begin{equation}\label{optimal_stopping_delay_special_4}
U^{\delta}(t,X_t)=\esssup_{\tau^{0}\in\mathcal{R}_t^{0}}\mathbf{E}\left[\int_t^{\tau^{0}}e^{-r(s-t)}\Theta^{\delta}(X_s)ds|\mathcal{F}_t\right].
\end{equation}
with its optimal stopping time $\tau^{0,*}$ given in (\ref{optimal_stopping_rule}). In turn, we obtain a decomposition formula for the American put option with delivery lags
\begin{equation}\label{Decomposition}
Y^{\delta}(t,X_t)=P(T-\delta,X_t)+U^{\delta}(t,X_t),
\end{equation}

\begin{remark}\label{remark}
One advantage of the optimal stopping formulation
(\ref{optimal_stopping_delay_special_4}) is that it does not have
final payoff but only has running payoff, and this will facilitate
our analysis of the associated optimal exercise boundary. In the
rest of the paper, we shall focus our analysis on the optimal
stopping problem (\ref{optimal_stopping_delay_special_4}) and its
associated variational inequality (\ref{VI1}).
\end{remark}

To solve (\ref{VI1}), introduce the transformation\footnote{For notation simplicity,
we suppress the superscript $\delta$ in $u^{\delta}$ and
$\theta^{\delta}$, and use $u$ and $\theta$ instead. The same
convention applies to the optimal exercise boundary $x(\tau)$ in
section \ref{sec:finite_horizon}.}
\begin{equation}\label{transform}
 x=\ln X-\ln K,\quad \tau=T-\delta-t,\quad
 u(\tau,x)=U^{\delta}(t,X),\quad
 \theta(x)=\Theta^{\delta}(X).
\end{equation}
Consequently, (\ref{VI1}) reduces to \be  \label{VI2}
 \left\{
 \begin{array}{ll}
 (\p_\tau -\widetilde{{\cal L}})u(t,x)=\theta(x),\ &\mbox{if}\;u(\tau,x)>0,\;\mbox{for}\;(\tau,x)\in{\cal N}_{T-\delta};
 \vspace{2mm} \\
 (\p_\tau -\widetilde{{\cal L}})u(t,x)\geq \theta(x),\ &\mbox{if}\;u(\tau,x)=0,\;\mbox{for}\;(\tau,x)\in{\cal N}_{T-\delta};
 \vspace{2mm}\\
 u(0,x)=0,\ &\mbox{for}\ x\in\mathbb{R},
 \end{array}
 \right.
\ee
 where $
 {\cal N}_{T-\delta}=(0,T-\delta\,]\times\mathbb{R}$, and
 \begin{equation*}
 \widetilde{{\cal
 L}}={\sigma^2\over2}\,\p_{xx}+\left(\,r-q-{\sigma^2\over2}\,\right)\p_x-r.
\end{equation*}
Moreover, it follows from Proposition \ref{regularity1} that $u\in W^{2,1}_{p,loc}({\cal N}_{T-\delta})\cap
C(\overline{{\cal N}_{T-\delta}})$ for $p\geq
  1$ and $\p_x u\in C(\overline{{\cal N}_{T-\delta}})$.

For the latter use, we present some basic properties of the Greek
$\Theta^{\delta}(X)$ whose proof is given in Appendix \ref{appendix:2}.

\begin{proposition}\label{Pro3} Let $\theta(x)=\Theta^{\delta}(X)$ with $x=\ln
X-\ln K$.
Then, the following assertions hold:

{(i)} There exists a unique
zero crossing point $\overline{X}\in\mathbb{R}$ such that
 $\theta(\overline{X})=0$.
 In addition, $\theta(x)<0$ for any $x<\overline{X}$, $\theta(x)>0$ for any
 $x>\overline{X}$, and $\theta^{\prime}(\overline{X})>0$.

 {(ii) For any $x<\overline{X}$, $\theta(x)\rightarrow qKe^x-rK$ as $\delta\rightarrow0^+$.}
\end{proposition}

\section{The perpetual case and its optimal exercise
boundary}\label{sec:perpetual}

We consider the perpetual version of the optimal stopping problem
(\ref{optimal_stopping_delay_special_4}), whose solution admits
explicit expressions (cf. (\ref{ivisolution}) and (\ref{l0}) below).
The perpetual problem is also closely related to the asymptotic
analysis of the optimal exercise boundary in section
\ref{sec:finite_horizon}.

For any
$\mathbb{F}$-stopping time $\tau^{0}\geq t$, we consider the perpetual version of (\ref{optimal_stopping_delay_special_4}), i.e.
\begin{equation}\label{optimal_stopping_delay_special_5}
U^{\delta}_{\infty}(X_t)=\esssup_{\tau^{0}\geq
t}\mathbf{E}\left[\int_t^{\tau^0}e^{-r(s-t)}\Theta^{\delta}(X_s)ds|\mathcal{F}_t\right].
\end{equation}
Using the similar arguments as in section \ref{sec:american}, we
obtain that $U^{\delta}_{\infty}(X)=u_{\infty}(x)$, where $x=\ln
X-\ln K$, and $u_{\infty}(\cdot)$ is the unique {bounded} strong
solution to the stationary variational inequality \be\label{IVI}
  \left\{
 \begin{array}{ll}
 -\widetilde{{\cal L}}\,u_\infty(x)=\theta(x),
 &\mbox{if}\;u_\infty(x)>0,\;\mbox{for}\;x\in\mathbb{R};
 \vspace{2mm} \\
 -\widetilde{{\cal L}}\,u_\infty(x)\geq \theta(x),
 &\mbox{if}\;u_\infty(x)=0,\;\mbox{for}\;x\in\mathbb{R},
 \end{array}
 \right.
\ee with $u_\infty\in W^2_{p,loc}(\mathbb{R})$ for
 $p\geq1$ and $(u_\infty)^\prime\in C(\mathbb{R})$.

\textcolor{red}{From Proposition \ref{Pro3}, we know that $\{\theta(x)\geq 0\}=\{x\geq \overline{X}\}$. In this domain, we consider the following PDE,
\begin{equation}\label{v infty}
 -\widetilde{{\cal L}}\,v_\infty(x)=\theta(x)>0,\;x\in(\overline{X},+\infty)\qquad
 v_\infty(\overline{X})=0.
\end{equation}
The above PDE has a unique classical solution $v\in C^2(\overline{X},+\infty)\cap C[\overline{X},+\infty)$. The strong maximum principle (see \cite{Evans}) implies that $v>0$ in $(\overline{X},+\infty)$.}

\textcolor{red}{Moreover, it is clear that $u_\infty$ satisfies
$$
 -\widetilde{{\cal L}}\,u_\infty(x)\geq \theta(x),\;x\in(\overline{X},+\infty)\qquad
 u_\infty(\overline{X})\geq0.
$$
Using the comparison principle (see {\cite{Friedman2} or \cite{Yan1}}) for the strong solution of PDE in $(\overline{X},+\infty)$, we deduce that $u_\infty\geq v_\infty>0$ in $(\overline{X},+\infty)$. So, it follows that
\begin{equation}\label{region}
\{x> \overline{X}\}\subseteq\{u_{\infty}(x)>0\}\ \ \text{and}\ \
\{x\leq \overline{X}\}\supseteq\{u_{\infty}(x)=0\}.
\end{equation}}

We can then define the \emph{optimal exercise boundary}
$\underline{X}$ as\footnote{{Note that from the definition of
$\underline{X}$, it may be possible that $\underline{X}=-\infty$. We
will however exclude such a situation in Proposition \ref{le1}.}}
\begin{equation}\label{freeboundary_0}
\underline{X}=\inf\{x\in\mathbb{R}: u_{\infty}(x)>0\}.
\end{equation}
The continuity of $u_{\infty}(\cdot)$ implies that $u_{\infty}(x)=0$
for $x\leq \underline{X}$ and, therefore, the player will exercise
the option in $(-\infty,\underline{X}]$. Moreover, it follows from
(\ref{region}) and (\ref{freeboundary_0}) that $\underline{X}\leq
\overline{X}$.

The next proposition relates variational inequality (\ref{IVI}) to a
free-boundary problem, which in turn provides the explicit
expressions for $u_{\infty}(\cdot)$ and $\underline{X}$.

\begin{proposition}\label{le1}
For $x>\underline{X}$, it holds that $u_{\infty}(x)>0$. Moreover,
$(u_{\infty}(\cdot),\underline{X})$ is the unique {bounded} solution
to the free-boundary problem \be\label{free_boundary_problem_1}
  \left\{
 \begin{array}{ll}
 -\widetilde{{\cal L}}\,u_\infty(x)=\theta(x),
 &\mbox{for}\;x>\underline{X};
 \vspace{2mm} \\
 u_\infty(x)=0,
 &\mbox{for}\;x\leq \underline{X};
 \vspace{2mm} \\
 (u_{\infty}){'}(\underline{X})=0,&\text{(smooth-pasting
 condition)},
 \end{array}
 \right.
\ee and satisfies {$\overline{X}> \underline{X}>-\infty$}.
\end{proposition}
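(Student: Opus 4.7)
My plan is to establish the assertions in the following order: first the structural identity $\{u_\infty > 0\} = (\underline X, \infty)$; next the ODE and the smooth-pasting condition; then the bounds $-\infty < \underline X < \overline X$; and finally uniqueness.

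For the structural step I would argue the open set $\{u_\infty > 0\}$ has no bounded connected component besides the one containing $(\overline X, \infty)$. By (\ref{region}) it has a maximal component of the form $(a_\ast, \infty)$ with $a_\ast \leq \overline X$. If a further bounded component $(a, b)$ existed, then $b \leq a_\ast \leq \overline X$ would force $\theta < 0$ on $(a, b)$ by Proposition \ref{Pro3}, and the equation $-\widetilde{\mathcal L} u_\infty = \theta < 0$ on $(a, b)$, together with $u_\infty > 0$ inside and $u_\infty(a) = u_\infty(b) = 0$, would contradict the strong maximum principle for $W^2_{p,\mathrm{loc}}$ strong solutions: at an interior maximum $x_m$, $u'_\infty(x_m) = 0$ and $u''_\infty(x_m) \leq 0$ (in the generalised sense) give $\widetilde{\mathcal L} u_\infty(x_m) \leq -r u_\infty(x_m) < 0$, incompatible with $\widetilde{\mathcal L} u_\infty(x_m) = -\theta(x_m) > 0$. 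Hence $\{u_\infty > 0\} = (a_\ast, \infty) = (\underline X, \infty)$ by (\ref{freeboundary_0}). The ODE $-\widetilde{\mathcal L} u_\infty = \theta$ on $(\underline X, \infty)$ is then the first line of (\ref{IVI}), and the smooth pasting $(u_\infty)'(\underline X) = 0$ is forced by $u_\infty \equiv 0$ on $(-\infty, \underline X]$ together with the continuity of $(u_\infty)'$ stated just after (\ref{IVI}).

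For the two bounds the key input is the limit $\theta(x) \to -rKe^{-r\delta}$ as $x \to -\infty$, obtained by expanding $P(T-\delta, X) = Ke^{-r\delta} - X e^{-q\delta} + o(X)$ as $X \to 0^+$. If $\underline X = -\infty$, the structural step gives $\{u_\infty > 0\} = \mathbb R$, so the stopping region in (\ref{optimal_stopping_delay_special_5}) is empty, the optimal stopping time is $\tau^0 = \infty$, and thus $u_\infty(x) = \mathbf E_x\bigl[\int_0^\infty e^{-rs}\theta(X_s)\, ds\bigr]$. Dominated convergence using the uniform bound on $\theta$ yields $u_\infty(x) \to -Ke^{-r\delta} < 0$ as $x \to -\infty$, contradicting $u_\infty \geq 0$. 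To rule out $\underline X = \overline X$, substituting $u_\infty(\overline X) = (u_\infty)'(\overline X) = 0$ into the ODE at $\overline X$ gives $(u_\infty)''(\overline X) = -2\theta(\overline X)/\sigma^2 = 0$; differentiating the ODE and invoking $\theta'(\overline X) > 0$ from Proposition \ref{Pro3} gives $(u_\infty)'''(\overline X) = -2\theta'(\overline X)/\sigma^2 < 0$, so a third-order Taylor expansion produces $u_\infty(x) < 0$ for $x$ slightly above $\overline X$, contradicting $u_\infty \geq 0$.

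Uniqueness is handled by reducing the free-boundary problem back to the variational inequality. Given any bounded $(v, a)$ solving (\ref{free_boundary_problem_1}), extend $v$ by zero on $(-\infty, a]$; the extension will be a bounded strong solution of (\ref{IVI}) once we show $v \geq 0$ on $(a, \infty)$ and $a \leq \overline X$, and then uniqueness of the bounded VI solution (stated just after (\ref{IVI})) forces $v = u_\infty$ and $a = \underline X$ by (\ref{freeboundary_0}). The first item is a repeat of the strong-maximum-principle argument of the structural step applied to any bounded connected component of $\{v < 0\}$. The second is a repeat of the Taylor-expansion argument above: if $a \geq \overline X$ the local expansion of $v$ at $a$ (using the ODE and, in the borderline case $a = \overline X$, its derivative) forces $v < 0$ just above $a$, contradicting $v \geq 0$. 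The main technical obstacle throughout is the rigorous application of the strong maximum principle at an interior extremum of a function of only $W^2_{p,\mathrm{loc}}$ regularity; I would dispatch this by appealing to the standard strong maximum principle for strong solutions of second-order linear elliptic equations.
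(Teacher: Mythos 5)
Your structural step, the smooth-pasting argument, and the two boundary-location arguments are sound, and two of them take a genuinely different route from the paper: the paper proves $\underline{X}>-\infty$ by solving the free-boundary ODE explicitly and locating a root of the transcendental equation $l(x)=0$ in (\ref{l0}), whereas your probabilistic argument (empty stopping region implies $u_\infty(x)=\mathbf{E}\bigl[\int_0^\infty e^{-rs}\theta(X_s)\,ds\bigr]\rightarrow -Ke^{-r\delta}<0$ as $x\rightarrow-\infty$, against $u_\infty\geq0$) is shorter and avoids the explicit solution entirely; and the paper excludes $\underline{X}\geq\overline{X}$ by comparing with an explicit exponential barrier $\underline{w}$ to contradict smooth pasting, whereas your third-order Taylor expansion at $\overline{X}$ (using $\theta(\overline{X})=0$, $\theta'(\overline{X})>0$, and one-sided $C^3$ regularity obtained by bootstrapping the ODE) reaches the same conclusion more directly.

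The uniqueness step, however, has a genuine gap. Your plan needs $v\geq 0$ on $(a,\infty)$ for an arbitrary bounded solution $(v,a)$ of (\ref{free_boundary_problem_1}), and you propose to get this by ``repeating the strong-maximum-principle argument'' on connected components of $\{v<0\}$. That argument only closes when the source term has the favourable sign: at an interior minimum $x_m$ of $v$ on such a component one gets $\theta(x_m)=-\widetilde{\mathcal{L}}v(x_m)\leq rv(x_m)<0$, which is a contradiction only if $\theta(x_m)\geq 0$, i.e.\ only for components contained in $[\overline{X},\infty)$. On a component meeting $\{\theta<0\}=(-\infty,\overline{X})$ --- exactly the region where a spurious solution with $a<\underline{X}$ would have to be ruled out --- the inequality is consistent with the ODE and nothing is contradicted. (Unbounded components of $\{v<0\}$ are also left unaddressed.) This is not cosmetic, because the hard case of uniqueness is $a<\underline{X}$, and there the paper argues differently: it compares $v$ with $u_\infty$ via $-\widetilde{\mathcal{L}}v=\theta\leq\theta I_{\{x>\underline{X}\}}=-\widetilde{\mathcal{L}}u_\infty$ together with identical Cauchy data at $a$ to conclude $v\leq u_\infty=0$ on $(a,\underline{X}]$, and then contradicts this with the second-order expansion $v(x)=\frac{-\theta(a)}{\sigma^2}(x-a)^2(1+o(1))>0$ just above $a$, valid since $\theta(a)<0$. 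You would need to import this comparison-plus-local-expansion argument, or find another route to $v\geq 0$; the minimum principle alone does not deliver it.
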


\begin{proof} {\emph{Step 1}. We prove that
$(u_{\infty}(\cdot),\underline{X})$ satisfies the free-boundary
problem (\ref{free_boundary_problem_1}).} To this end, we first show
what $u_{\infty}(x)>0$ for $x>\underline{X}$. Since
$u_{\infty}(x)>0$ for $x>\overline{X}$, we only need to show that
$u_{\infty}>0$ on $(\underline{X},\overline{X}]$. If not, let
$x_1,\,x_2\in[\,\underline{X},\overline{X}\,]$ be such that
$$x_1<x_2,\,u_{\infty}(x_1)=u_{\infty}(x_2)=0,\ \text{and}\
u_{\infty}(x)>0\ \text{for\ any}\ x\in(x_1,x_2).$$ Using variational
inequality (\ref{IVI}) and Proposition \ref{Pro3}, we obtain that
\begin{eqnarray*}
  \left\{
 \begin{array}{ll}
 -\widetilde{{\cal L}}\,u_\infty(x)=\theta(x)\leq 0,
 &\mbox{for}\ x\in(x_1,x_2);
 \vspace{2mm} \\
 u_\infty(x_1)=u_{\infty}(x_2)=0.
 &
 \end{array}
 \right.
\end{eqnarray*}
The comparison principle then implies that $u_{\infty}(x)\leq 0$ for
$x\in(x_1,x_2)$, which is a contradiction.

To prove the smooth-pasting condition, we observe that
$(u_{\infty})^{\prime}$ is continuous, and that $u_{\infty}(x)=0$
for $x\leq \underline{X}$.  Therefore,
$(u_{\infty})^{\prime}(\underline{X}+0)=(u_{\infty})^{\prime}(\underline{X}-0)=0$,
and $(u_{\infty}(\cdot),\underline{X})$ indeed satisfies the free
boundary problem (\ref{free_boundary_problem_1}).\smallskip

\emph{Step 2.} we prove that $(u_{\infty}(\cdot),\underline{X})$ is
actually the unique solution to (\ref{free_boundary_problem_1}). To
this end, we first show that if
$(u_{\infty,1}(\cdot),\underline{X}_1)$ is any solution solving
(\ref{free_boundary_problem_1}), then it is necessary that
$\underline{X}_1<\overline{X}$. If not, by
(\ref{free_boundary_problem_1}) and Proposition \ref{Pro3}, we have
\begin{eqnarray*}
  \left\{
 \begin{array}{ll}
 -\widetilde{{\cal L}}\,u_{\infty,1}(x)=\theta(x)>0,
 &\mbox{for}\;x>\underline{X}_1\geq \overline{X};
 \vspace{2mm} \\
 u_{\infty,1}(\underline{X}_1)=(u_{\infty,1}){'}(\underline{X}_1)=0.
 \end{array}
 \right.
\end{eqnarray*}
The strong comparison principle (see \cite{Evans}) then implies that
$u_{\infty,1}(x)>0$ for $x>\underline{X}_1$.

Next we compare $u_{\infty,1}(x)$ with an auxiliary function
$$\underline{w}(x)=u_{\infty,1}(\underline{X}_1+1)w(x;\underline{X}_1,\underline{X}_1+1)$$
in the interval $(\underline{X}_1,\underline{X}_1+1)$, where
$$
 w(x;a,b)={e^{\lambda^+(x-a\,)}-e^{\lambda^-(x-a\,)}\over e^{\lambda^+(b-a)}-e^{\lambda^-(b-a)}},
$$ with $\lambda^+$ and $\lambda^-$ being, respectively, the positive and negative
characteristic roots of $\widetilde{\cal{L}}$:
$$
 {\sigma^2\over 2}\lambda^2+\left(\,r-q-{\sigma^2\over
 2}\,\right)\lambda-r=0.
$$
It is clear that
$$w(a;a,b)=0,\quad w(b;a,b)=1,\quad w'(a;a,b)>0,\quad -\widetilde{\cal{L}}w=0\ \text{in}\ (a,b).$$
In turn,
\begin{eqnarray*}
  \left\{
 \begin{array}{ll}
  -\widetilde{{\cal
 L}}\underline{w}(x)=0<-\widetilde{{\cal L}}u_{\infty,1}(x),\qquad\mbox{for}\;x\in(\underline{X}_1,\underline{X}_1+1);
 \vspace{2mm} \\
 u_{\infty,1}(\underline{X}_1)=\underline{w}(\underline{X}_1),\qquad\ \

 u_{\infty,1}(\underline{X}_1+1)=\underline{w}(\underline{X}_1+1).
 \end{array}
 \right.
\end{eqnarray*}
Hence, the comparison principle implies that $
 u_{\infty,1}(x)\geq \underline{w}(x)$ for
 $x\in(\underline{X}_1,\underline{X}_1+1)$.
In turn, $(u_{\infty,1})^\prime(\underline{X}_1)\geq
 \underline{w}^\prime(\underline{X}_1)>0,
$
 which contradicts the smooth-pasting condition $(u_{\infty,1})^\prime(\underline{X}_1)=0$.

Now we show that $(u_{\infty}(\cdot),\underline{X})$ is the unique
solution to (\ref{free_boundary_problem_1}). If not, let
 $(u_{\infty,\,1},\underline{X}_1)$ be another solution of the free-boundary problem \eqref{free_boundary_problem_1}. Without loss of
 generality, we may assume that
 $\underline{X}_1<\underline{X}<\overline{X}$.
It is immediate to check that
\begin{eqnarray*}
 \left\{
 \begin{array}{ll}
  -\widetilde{\cal L}u_{\infty,\,1}(x)=\theta(x)\\
  \ \ \ \ \ \ \ \ \ \ \ \ \ \ \ \ \leq \theta(x) I_{\{x>\underline{X}\}}
 =-\widetilde{\cal L}u_{\infty}(x),\ \text{for}\
 x\in(\,\underline{X}_1,\infty);
 \vspace{2mm}\\
 u_{\infty,\,1}(\underline{X}_1)=u_{\infty}(\underline{X}_1)=0;\\
 (u_{\infty,\,1})^\prime(\underline{X}_1)=(u_{\infty})^\prime(\underline{X}_1)=0,
 \end{array}
 \right.
\end{eqnarray*}
 where we have used the fact $\theta(x)<0$ for any $x\leq \underline{X}<\overline{X}$. {The comparison principle} then implies that
 $u_{\infty,\,1}(x)\leq u_{\infty}(x)$ and, in particular,
 $u_{\infty,\,1}(x)\leq u_{\infty}(x)=0$ for
 $x\in[\,\underline{X}_1,\underline{X}]$.

On the other hand, applying Taylor's expansion to $u_{\infty,1}(x)$
yields
$$u_{\infty,1}(x)=\frac12u_{\infty,1}''(\underline{X}_1+0)(x-\underline{X}_1)^2(1+o(1))=\frac{-\theta(\underline{X}_1)}{\sigma^2}
(x-\underline{X}_1)^2(1+o(1)),$$ which further implies that
$u_{\infty,1}(x)>0$ if $x$ is close enough to $\underline{X}_{1}$.
Thus, we obtain a contradiction.

\smallskip

\emph{Step 3}. We prove that  $\overline{X}> \underline{X}>-\infty$.
Since we have already showed that $\underline{X}<\overline{X}$ in
Step 2, it is sufficient to prove that $\underline{X}>-\infty$.

{In fact, using} the free-boundary formulation
(\ref{free_boundary_problem_1}), we further obtain that its solution
must have the form
\begin{equation*}
 u_\infty(x)=CK e^{\lambda^-x}-p(x),\ \text{for}\
 x>\underline{X},
\end{equation*}
where the constants $C$ is to be determined,
 $\lambda^-$ is the negative root of the characteristic equation
 for $\widetilde{\cal{L}}$,
and
 $p(x)=p(T-\delta,x)$ is the price of the European put option (cf.
(\ref{price_european}) with $t=T-\delta$).

In order to fix the constant $C$ and  the optimal exercise boundary
$\underline{X}$, we make use of the boundary and smooth-pasting
conditions in (\ref{free_boundary_problem_1}), and obtain that
\begin{eqnarray*}
 \left\{
 \begin{array}{ll}
 CKe^{\lambda^-\underline{X}}=p(\underline{X}\,)=\left[\,Ke^{-r\delta}N(-\underline{d}\,_2)
 -Ke^{\underline{X}-q\delta}N(-\underline{d}\,_1)\right];
 \vspace{2mm}\\
 CK\lambda^-e^{\lambda^-\underline{X}}=p^\prime(\underline{X}\,)
 =-Ke^{\underline{X}-q\delta}N(-\underline{d}\,_1),
 \end{array}
 \right.
\end{eqnarray*} where $\underline{d}\,_1$ and $\underline{d}\,_2$ are the same
as $d_1$ and $d_2$ in (\ref{defofN}) except that $x$ is replaced by
$\underline{X}$ (see Appendix \ref{appendix:2} for the notations).
Thus, we obtain that \be\label{ivisolution}
 u_\infty(x)=\left\{
 \begin{array}{ll}
 p(\underline{X}) \,e^{\lambda^-(x-\underline{X}\,)}-p(x) ,
 &\text{for}\ x>\underline{X}\;;
 \vspace{2mm}\\
 0 &\text{for}\ x\leq\underline{X}\,,
 \end{array}
 \right.
\ee and $\underline{X}$ is the zero crossing point of the algebraic
equation \be\label{l0}
 l(x)=\lambda^-e^{-r\delta}N(-{d}_2)+(1-\lambda^-)e^{x-q\delta}N(-{d}_1)=0.
\ee

{Next, we prove that the zero crossing point of $l(x)=0$ exists and
is unique. It is clear that, when $x\rightarrow-\infty$,
$$
 d_1,\,d_2\rightarrow-\infty,\;\;N(-d_1),\,N(-d_2)\rightarrow1,\;\;
 l(x)\rightarrow\lambda^-e^{-r\delta}+o(1)<0.
$$
Hence, $l(x)$ is negative provided $x$ is small enough. On the other
hand, by (\ref{inequ1}) and (\ref{inequ2}), we have
$$
 d_1,\,d_2\rightarrow+\infty,\;\;
 N(-d_1)={N'(-d_1)\over d_1}\Big(1+o(1)\Big),\;\;
 N(-d_2)={N'(-d_2)\over d_2}\Big(1+o(1)\Big),
$$
as $x\rightarrow+\infty$, and therefore,
\begin{eqnarray*}
 {l(x)e^{r\delta}\over N'(-d_2)}
 &=&{\lambda^-\over d_2}\Big(1+o(1)\Big)
 +{1-\lambda^-\over d_1}\Big(1+o(1)\Big)
 \\[2mm]
 &=&{d_2+\lambda^-(d_1-d_2)\over
 d_1\,d_2}\Big(1+o(1)\Big)={1\over d_1}\Big(1+o(1)\Big).
\end{eqnarray*}
Hence, $l(x)$ is positive provided $x$ is large enough. Thus, we
deduce that there exists at least one zero crossing point of
$l(x)=0$.} Thanks to the uniqueness of the solution to the
free-boundary problem (\ref{free_boundary_problem_1}), we know that
the zero crossing point of the algebraic equation (\ref{l0}) is also
unique, from which we conclude that $\underline{X}>-\infty$.
\end{proof}

\section{The optimal exercise boundary and its asymptotic
analysis}\label{sec:finite_horizon}

With all the preparations, we are ready to prove Theorem \ref{main_theorem} (ii) and (iii).
For illustration purpose,
we first
demonstrate the optimal exercise boundary through Figures 1 and 2 as
 below.

 \begin{picture}(0,130)(140,0)
\put(190,10){\vector(1,0){150}} \put(260,6){\vector(0,1){100}}
\put(258,108){$\tau$}\put(343,10){$x$}
{\thicklines\qbezier(280,9)(220,46)(210,100)}
\put(207,10){\line(0,1){10}}\put(207,25){\line(0,1){10}}\put(207,40){\line(0,1){10}}
\put(207,55){\line(0,1){10}}\put(207,70){\line(0,1){10}}\put(207,85){\line(0,1){10}}
 \put(277,7){$\bullet$}\put(282,12){$\overline{X}$}
 \put(200,50){${\bf ER}$}\put(251,50){${\bf CR}$}
 \put(200,105){${ x(\tau)}$}
 \put(200,0){$\underline{X}$}
 \put(175,80){$u=0$}\put(250,80){$u>0$}
\end{picture}
\begin{center}
\small{Figure 1: Optimal exercise boundary $x(\tau)$ under the
coordinates $(\tau,x)$.}
\end{center}

 \begin{picture}(0,130)(140,0)
\put(190,10){\vector(1,0){150}} \put(200,6){\vector(0,1){110}}
\put(195,115){$t$}\put(343,10){$X$}
{\thicklines\qbezier(225,10)(240,85)(310,90)}
\put(220,10){\line(0,1){10}}\put(220,25){\line(0,1){10}}\put(220,40){\line(0,1){10}}
\put(220,55){\line(0,1){10}}\put(220,70){\line(0,1){10}}\put(220,85){\line(0,1){10}}
\put(200,90){\line(1,0){10}}\put(215,90){\line(1,0){10}}\put(230,90){\line(1,0){10}}
\put(245,90){\line(1,0){10}}\put(260,90){\line(1,0){10}}\put(275,90){\line(1,0){10}}
\put(290,90){\line(1,0){10}}\put(305,90){\line(1,0){10}}\put(310,90){\line(1,0){8}}
\put(325,90){\line(1,0){10}}
\put(198,87){$\bullet$}\put(174,84){$T-\delta$}
\put(200,105){\line(1,0){10}}\put(215,105){\line(1,0){10}}\put(230,105){\line(1,0){10}}
\put(245,105){\line(1,0){10}}\put(260,105){\line(1,0){10}}\put(275,105){\line(1,0){10}}
\put(290,105){\line(1,0){10}}\put(305,105){\line(1,0){10}}\put(310,105){\line(1,0){8}}
\put(325,105){\line(1,0){10}}
\put(198,102){$\bullet$}\put(182,101){$T$}
 \put(210,50){${\bf ER}$}\put(320,50){${\bf CR}$}
 \put(260,65){${X^\delta(t)}$}
 \put(218,7){$\bullet$}\put(200,-3){$K e^{\underline{X}}$}
 \put(308,87){$\bullet$}\put(300,92){$K e^{\overline{X}}$}
 \put(205,63){$U^\delta=0$}\put(310,63){$U^\delta>0$}
\end{picture}
\begin{center}
\small{Figure 2: Optimal exercise boundary $X^{\delta}(t)$ under the
coordinates $(t,X)$.}
\end{center}

Figure 1 is under the coordinates $(\tau,x)$, and Figure 2 is under
the coordinates $(t,X)$,
 where $\tau=T-\delta-t$ and $x=\ln X-\ln
 K$ (cf. the transformation (\ref{transform})).
Figure 2 illustrates that the whole region $\Omega_{T-\delta}$ is
divided by a curve $X^{\delta}(t)$ into two parts. In the left
region, the investor will exercise the option (with time lag
$\delta$), and in the right region the investor will hold the
option. Hence, $X^{\delta}(t)$ is called the \emph{optimal exercise
boundary}. If we denote by $x(\cdot)$ the optimal
 exercise boundaries under the coordinates $(\tau,x)$, as shown in
 Figure 1, then we have the
 relationship
\begin{equation}\label{relation}
 X^{\delta}(t)=K\exp{\{x(T-\delta-t)\}}.
 \end{equation}
\subsection{Proof of Theorem \ref{main_theorem} (ii)}

Due to Remark \ref{remark}, we will mainly work with variational
inequality (\ref{VI2}) for $u(\cdot,\cdot)$. Recall
${\cal N}_{T-\delta}=(0,T-\delta\,]\times\mathbb{R}$. Define the
exercise domain ${\bf ER}$ and the continuation domain ${\bf CR}$ as
\begin{align*}
 {\bf ER}&=\{(\tau,x)\in\mathcal{N}_{T-\delta}:u(\tau,x)=0\};\qquad
\\{\bf CR}&= \{(\tau,x)\in\mathcal{N}_{T-\delta}:u(\tau,x)>0\}.
\end{align*}

\begin{lemma}\label{Th3}
Let $\overline{X}$ and $\underline{X}$ be given in Proposition
\ref{Pro3} and (\ref{freeboundary_0}), respectively. Then, it holds
that
$$\{\,x\leq \overline{X}\,\}\supseteq{\bf ER}\supseteq\{\,x\leq\underline{X}\,\}\ \ \text{and}\ \ \{\,x> \overline{X}\,\}\subseteq{\bf CR}\subseteq\{\,x>\underline{X}\,\}.$$
\end{lemma}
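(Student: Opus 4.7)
The plan is to establish the two nontrivial inclusions $\{x>\overline{X}\}\subseteq\mathbf{CR}$ and $\{x\le\underline{X}\}\subseteq\mathbf{ER}$; the reverse inclusions $\mathbf{ER}\subseteq\{x\le\overline{X}\}$ and $\mathbf{CR}\subseteq\{x>\underline{X}\}$ then follow immediately by contraposition, since $\mathbf{ER}$ and $\mathbf{CR}$ partition $\mathcal{N}_{T-\delta}$.

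For $\{x\le\underline{X}\}\subseteq\mathbf{ER}$, I would compare the finite-horizon value $u$ with the perpetual value $u_\infty$. Every admissible stopping time $\tau^{0}\in\mathcal{R}_t^{0}$ for (\ref{optimal_stopping_delay_special_4}) satisfies $\tau^{0}\ge t$ and is therefore also admissible for the perpetual problem (\ref{optimal_stopping_delay_special_5}), so the supremum over the smaller class cannot exceed the supremum over the larger one, yielding $U^{\delta}(t,X)\le U_{\infty}^{\delta}(X)$, i.e.\ $u(\tau,x)\le u_{\infty}(x)$ throughout $\mathcal{N}_{T-\delta}$. Combined with the trivial bound $u\ge 0$ (take $\tau^{0}=t$) and with $u_{\infty}\equiv 0$ on $\{x\le\underline{X}\}$, which follows from the definition (\ref{freeboundary_0}) together with the continuity of $u_{\infty}$ established in Proposition \ref{le1}, this forces $u(\tau,x)=0$ on $\{x\le\underline{X}\}$.

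For $\{x>\overline{X}\}\subseteq\mathbf{CR}$, I would exhibit an admissible suboptimal stopping rule in (\ref{optimal_stopping_delay_special_4}) whose expected payoff is strictly positive. Fix $(\tau,x)\in\mathcal{N}_{T-\delta}$ with $x>\overline{X}$, set $t=T-\delta-\tau$ and $X=Ke^{x}$, pick $\eta\in(0,\tau)$, and introduce the hitting time $\tau_{h}=\inf\{s\ge t:X_{s}\le K\exp((x+\overline{X})/2)\}$. Path continuity of $X$ gives $\tau_{h}>t$ almost surely, and the admissible stopping time $\tau^{0}=(t+\eta)\wedge\tau_{h}\in\mathcal{R}_t^{0}$ therefore satisfies $\tau^{0}>t$ a.s.; on the stochastic interval $[t,\tau^{0}]$ the stock price stays above $K\exp((x+\overline{X})/2)>Ke^{\overline{X}}$, so Proposition \ref{Pro3} yields $\Theta^{\delta}(X_{s})>0$ pathwise on $(t,\tau^{0}]$. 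Since the resulting nonnegative integrable random variable is almost surely strictly positive, we obtain the strict lower bound $u(\tau,x)\ge\mathbf{E}\bigl[\int_{t}^{\tau^{0}}e^{-r(s-t)}\Theta^{\delta}(X_{s})\,ds\bigr]>0$. The main obstacle is precisely this last step: because $\Theta^{\delta}$ decays to zero as $X\to\infty$, there is no uniform strictly positive lower bound on $\Theta^{\delta}$ along the trajectory, so the argument has to rely on translating almost-sure strict positivity of an integrable random variable into strict positivity of its expectation, which is exactly what the hitting-time localization is designed to deliver.
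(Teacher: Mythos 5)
Your proof is correct, and it reaches the two nontrivial inclusions by a more probabilistic route than the paper. For $\{x\le\underline{X}\}\subseteq{\bf ER}$ the underlying idea is the same --- dominate $u$ by the perpetual value $u_\infty$ and use $u\ge 0$ together with $u_\infty\equiv 0$ on $(-\infty,\underline{X}]$ --- but you obtain $u\le u_\infty$ by the inclusion of admissible stopping times, whereas the paper invokes the comparison principle for the variational inequality (\ref{VI2}), viewing $u_\infty$ as a supersolution with initial data $u_\infty\ge 0=u(0,\cdot)$; your version is in fact exactly the one-line argument the paper itself uses later in the proof of Theorem \ref{theorem}. For the $\overline{X}$-inclusion the approaches genuinely diverge: the paper proves ${\bf ER}\subseteq\{x\le\overline{X}\}$ analytically, by noting that at a point where $u=0$ the variational inequality forces $0\ge\theta(x)$ and then appealing to the sign structure of $\theta$ from Proposition \ref{Pro3}, while you prove the contrapositive $\{x>\overline{X}\}\subseteq{\bf CR}$ by exhibiting the suboptimal rule $\tau^0=(t+\eta)\wedge\tau_h$ that keeps the path strictly above the level $Ke^{(x+\overline{X})/2}>Ke^{\overline{X}}$ and converts almost-sure strict positivity of the (bounded, hence integrable) running payoff into strict positivity of its expectation. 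Your construction is slightly longer but avoids any discussion of what the differential operator does to $u$ at points of the exercise region (which in the paper is stated somewhat tersely), and it only uses the continuity of paths and Proposition \ref{Pro3}; the paper's version is shorter given that the $W^{2,1}_{p,loc}$ regularity and the a.e.\ interpretation of (\ref{VI2}) are already in place. Your reduction of the remaining two inclusions to these by complementation is legitimate, since ${\bf ER}$ and ${\bf CR}$ partition $\mathcal{N}_{T-\delta}$ by definition.
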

\begin{proof}
 In order to prove that ${\bf ER}\supseteq\{\,x\leq \underline{X}\,\}$, we compare $u(\cdot,\cdot)$ and $u_\infty(\cdot)$,
 the latter of which is the solution to variational
 inequality~\eqref{IVI}. Note that
\bee
  \left\{
 \begin{array}{ll}
 (\p_\tau -\widetilde{{\cal L}})u_\infty(x)=\theta(x),&\mbox{if}\;u_\infty(x)>0,\;\mbox{for}\;(\tau,x)\in{\cal N}_{T-\delta};
 \vspace{2mm} \\
 (\p_\tau-\widetilde{{\cal L}})u_\infty(x)\geq \theta(x),&\mbox{if}\;u_\infty(x)=0,\;\mbox{for}\;(\tau,x)\in{\cal N}_{T-\delta};
 \vspace{2mm}\\
 u_\infty(x)\geq 0=u(0,x),&\text{for}\ x\in\mathbb{R}.
 \end{array}
 \right.
\eee
 The comparison principle for variational inequality (\ref{VI2}) in the domain $\mathcal{N}_{T-\delta}$ then implies that $u(\tau,x)\leq u_\infty(x)$. But
 if $x\leq \underline{X}$, according to the free-boundary problem (\ref{free_boundary_problem_1}), $u_{\infty}(x)=0$. In turn, $u(\tau,x)=0$. This proves
 that
 $\{x\leq \underline{X}\,\}\subseteq {\bf ER}$.

 \textcolor{red}{Repeating the above argument to compare $u$ and $v_\infty$ in the domain $\{x\geq \underline{X}\,\}$, where $v_\infty$ is the classical solution of PDE~\eqref{v infty}, we obtain $u\geq v_\infty>0$ in the domain $\{x>\underline{X}\,\}$, it follows that
$\{\,x>\overline{X}\,\}\subseteq{\bf CR}$.}
\end{proof}

Intuitively, when $\theta(x)$ is positive (i.e. $x> \overline{X}$),
the running payoff in (\ref{optimal_stopping_delay_special_4}) is
also positive, so the investor will hold the option. In the
contrary, when $\theta(x)$ is non-positive (i.e. $x\leq
\overline{X}$), one may think that the investor would then exercise
the option to stop her losses. However, the above lemma shows that
for {$x\leq\underline{X}$}, the investor may still hold the option,
and wait for the running payoff to rally at a later time to recover
her previous losses.

Next, we define the \emph{optimal exercise boundary} $x(\tau)$ as
\begin{equation}\label{freeboundary_10}
x(\tau)=\inf\{x\in\mathbb{R}: u(\tau,x)>0\},
\end{equation} for any $\tau\in(0,T-\delta]$. It follows from Lemma \ref{Th3}
that $x(\tau)\in[\underline{X},\overline{X}]$, and by the continuity
of $u(\cdot,\cdot)$, $u(\tau,x)=0$ for $x\leq x(\tau)$.

\begin{lemma}\label{lemma11} For $\tau\in(0,T-\delta]$, let
$$x_1(\tau)=\sup\{x\in\mathbb{R}: u(\tau,x)=0\}.$$
Then, $x(\tau)=x_{1}(\tau)$. Hence, $x(\tau)$ is the unique curve
separating $\mathcal{N}_{T-\delta}$ such that $u(\tau,x)=0$ for
$x\leq x(\tau)$ and $u(\tau,x)>0$ for $x\geq x(\tau)$.
\end{lemma}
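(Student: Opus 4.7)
The plan is to prove $x(\tau) = x_1(\tau)$ by contradiction, using a parabolic maximum principle on a small rectangle cut out by an assumed pair of zeros of $u(\tau, \cdot)$. The inequality $x(\tau) \leq x_1(\tau)$ is immediate from the continuity of $u$, which yields $u(\tau, x(\tau)) = 0$ and hence $x(\tau) \in \{x : u(\tau, x) = 0\}$. Assuming toward a contradiction that $x(\tau) < x_1(\tau)$, I would pick $x^* \in (x(\tau), x_1(\tau))$ with $u(\tau, x^*) > 0$ (which exists because $x(\tau) = \inf\{x : u(\tau, x) > 0\}$), and let $(\alpha, \beta)$ be the connected component of the open set $\{x : u(\tau, x) > 0\}$ containing $x^*$. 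Lemma \ref{Th3} (together with the fact that $x_1(\tau) \leq \overline{X}$ lies in $\{u(\tau,\cdot)=0\}$ above $x^*$) then gives $\underline{X} \leq \alpha < \beta \leq x_1(\tau) \leq \overline{X}$, so $[\alpha, \beta]$ is bounded, $u(\tau, \alpha) = u(\tau, \beta) = 0$, $u(\tau, \cdot) > 0$ on $(\alpha, \beta)$, and by Proposition \ref{Pro3} also $\theta \leq 0$ on $[\alpha, \beta]$.

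The key additional ingredient is time monotonicity: the proof of Proposition \ref{Th5.1} shows $\partial_t V^\delta \leq 0$ a.e., which via $u(\tau, x) = V^\delta(T - \delta - \tau, K e^x) - P(T - \delta, K e^x)$ translates to $\partial_\tau u \geq 0$ a.e. Combining this almost-everywhere monotonicity with the continuity of $u$ (by taking limits of $x$ in the full-measure set of good spatial points), I would deduce $u(\tau', x) \leq u(\tau, x)$ for every $x$ and every $\tau' \in (0, \tau]$. Applied at $x = \alpha$ and $x = \beta$ this forces $u(\tau', \alpha) = u(\tau', \beta) = 0$ throughout $(0, \tau]$, and combined with the initial condition $u(0, \cdot) = 0$ this shows that $u$ vanishes on the entire parabolic boundary $\partial_p Q := \{0\} \times [\alpha, \beta] \cup [0, \tau] \times \{\alpha, \beta\}$ of the cylinder $Q := (0, \tau) \times (\alpha, \beta)$. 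Inside $Q$, the variational inequality (\ref{VI2}) gives $(\partial_\tau - \widetilde{\mathcal{L}}) u = \theta(x)$ a.e. on $\{u > 0\}$, while on $\{u = 0\}$ the Sobolev derivatives of $u$ vanish a.e., so $(\partial_\tau - \widetilde{\mathcal{L}}) u = 0$ a.e. there. Since $\theta \leq 0$ on $[\alpha, \beta] \subseteq (-\infty, \overline{X}]$, one obtains $(\partial_\tau - \widetilde{\mathcal{L}}) u \leq 0$ a.e. in $Q$. The operator $\partial_\tau - \widetilde{\mathcal{L}}$ has zeroth-order coefficient $r > 0$, so the weak parabolic maximum principle for $W^{2,1}_{p,loc}$ strong solutions on the bounded cylinder $Q$ yields $\sup_{\overline{Q}} u \leq \sup_{\partial_p Q} u^+ = 0$. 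With $u \geq 0$ this forces $u \equiv 0$ on $\overline{Q}$, contradicting $u(\tau, x^*) > 0$. This proves $x(\tau) = x_1(\tau)$, and the second assertion of the lemma follows directly from the definitions together with the continuity of $u$.

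The delicate step I expect to be the main obstacle is the promotion of the almost-everywhere inequality $\partial_\tau u \geq 0$ to a pointwise inequality along the two specific vertical lines $x = \alpha$ and $x = \beta$, which is needed to identify the parabolic boundary values; this requires first restricting to a full-measure set of spatial points on which $\tau \mapsto u(\tau, x)$ is non-decreasing and then lifting the result to every $x$ by continuity of $u$. Once this monotonicity is in hand, both the vanishing on $\partial_p Q$ and the invocation of the parabolic maximum principle for $W^{2,1}_{p,loc}$ solutions are fairly routine.
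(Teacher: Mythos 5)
Your proof is correct, and it rests on the same two pillars as the paper's argument --- the time-monotonicity $\partial_\tau u\ge 0$ inherited from the proof of Proposition \ref{Th5.1}, and the sign $\theta<0$ on $(-\infty,\overline{X})$ from Proposition \ref{Pro3} --- but you combine them along a genuinely different route. The paper stays entirely on the single time slice $\tau=\tau^*$: it takes an interior maximum point $x^*$ of $u(\tau^*,\cdot)$ on $[x(\tau^*),x_1(\tau^*)]$ with $u(\tau^*,x^*)>0$, where $\partial_x u=0$, $\partial_{xx}u\le0$ and $ru>0$ force $-\widetilde{\mathcal L}u(\tau^*,x^*)>0$, while the equation in the continuation region together with $\partial_\tau u\ge0$ gives $-\widetilde{\mathcal L}u(\tau^*,x^*)=\theta(x^*)-\partial_\tau u(\tau^*,x^*)\le\theta(x^*)<0$ --- a contradiction at a single point, where $u$ is classical by interior regularity. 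You instead run a parabolic maximum principle on the space-time cylinder $(0,\tau)\times(\alpha,\beta)$, which requires the additional steps of isolating the connected component, upgrading the a.e.\ monotonicity to pointwise monotonicity along the lateral boundaries so that $u$ vanishes on the whole parabolic boundary, and verifying $(\partial_\tau-\widetilde{\mathcal L})u\le0$ a.e.\ including on the coincidence set $\{u=0\}$ via the vanishing of the Sobolev derivatives there; all of these steps are carried out correctly and the component endpoints do satisfy $\underline{X}\le\alpha<\beta\le x_1(\tau)\le\overline{X}$ as you claim. What each approach buys: the paper's is shorter and needs $\partial_\tau u\ge 0$ only at one point in the open set $\{u>0\}$ (where it holds classically), avoiding the a.e.-to-pointwise promotion you rightly flag as the delicate step; yours is the parabolic analogue of the elliptic comparison argument the paper itself uses in Step 1 of Proposition \ref{le1}, and is somewhat more robust since it does not rely on the strict positivity of the zeroth-order term at the maximum point.
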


\begin{proof} The definition of $x_{1}(\tau)$ implies that $x(\tau)\leq
x_1(\tau)$ and $u(\tau,x)>0$ for $x\geq x_1(\tau)$. Moreover, it
follows from Lemma \ref{Th3} that
$x_1(\tau)\in[\underline{X},\overline{X}]$.

Suppose $x(\tau^*)< x_1(\tau^*)$ for some $\tau^*\in(0,T-\delta]$.
The continuity of $u$ implies that
$u(\tau^*,x(\tau^*))=u(\tau^*,x_1(\tau^*))=0$. Let $x^*$ be a
maximum point of $u(\tau^*,\cdot)$ in the interval
$[x(\tau^*),x_1(\tau^*)]$. Suppose that $u(\tau^*,x^*)> 0$;
otherwise {$u(\tau^*,x)\equiv0$ in the interval
$[x(\tau^*),x_1(\tau^*)]$, which contradicts the definition of
$x^*(\tau)$}. Since $u(\tau^*,x^*)>0$, $\partial_xu(\tau^*,x^*)=0$
and $\partial_{xx}u(\tau^*,x^*)\leq0$, we have
$$-\widetilde{{\cal L}}u(\tau^*,x^*)=-{\sigma^2\over2}\,\p_{xx}u(\tau^*,x^*)-\left(\,r-q-{\sigma^2\over2}\,\right)\p_xu(\tau^*,x^*)+ru(\tau^*,x^*)> 0.$$

On the other hand, by the continuity of $u$, there exits a
neighborhood of $(\tau^*,x^*)$ such that $u>0$, so
$\p_{\tau}u-\widetilde{{\cal L}}u=\theta$. In turn,
$$-\widetilde{{\cal L}}u(\tau^*,x^*)=\theta(x^*)-\p_{\tau}u(\tau^*,x^*).$$
Since \begin{equation}\label{timederivative}
\partial_{\tau}u(\tau,x)=-\partial_{t}U^{\delta}(t,X)=-\partial_{t}V^{\delta}(t,X)\geq
0,
\end{equation}
where we have used the transformation (\ref{transform}) and the
decomposition (\ref{Decomposition}) in the first two equalities, and
(\ref{derivative with respect to t}) in the last inequality, we
further get
$$-\widetilde{{\cal L}}u(\tau^*,x^*)\leq \theta(x^*)< 0.$$ This is a
contradiction. Thus, we must have $x(\tau^*)=x_1(\tau^*)$.
\end{proof}

From the above lemma, we deduce that the exercise region and the
continuation region are equivalent to
\begin{align*}
 {\bf ER}&=\{(\tau,x)\in {\cal N}_{T-\delta}:x\leq x(\tau)\};\ \\
 {\bf CR}&=\{(\tau,x)\in {\cal N}_{T-\delta}:x> x(\tau)\}.
\end{align*}

We return to the proof of Theorem \ref{main_theorem} (ii). Note that it is equivalent to the following proposition in terms of $x(\cdot)$.

\begin{proposition}\label{Th5}
Let $x(\tau)$ be the optimal exercise boundary given in
(\ref{freeboundary_10}). Then, the following assertions {hold}:

(i) Monotonicity: ${x(\tau)}$ is strictly decreasing in $\tau$;
\footnote{\textcolor{red}{Recently, \cite{TA} provides a new
probabilistic argument to prove that the free boundary is strictly
monotonic. We thank the referee for pointing out \cite{TA}.}}

(ii) Position: $x(\tau)$ is with
 the starting point
 $x(0)=\lim\limits_{\tau\rightarrow0^+}x(\tau)=\overline{X}$;

(iii) Regularity: ${x(\cdot)}\in C^\infty(0,T-\delta\,]$ and
${u(\cdot,\cdot)}\in C^\infty(\{x\geq x(\tau):\tau\in(0,T-\delta\,]\}).$
\end{proposition}
\begin{proof}
 (i) We first show that $x(\tau)$ is non-increasing.  For any
$0\leq \tau_1<\tau_2\leq T-\delta$, we then have
$0=u(\tau_2,x(\tau_2))\geq u(\tau_1,x(\tau_2))\geq 0$. Thus,
$u(\tau_2,x(\tau_2))=u(\tau_1,x(\tau_2))=0$, and together with Lemma
\ref{lemma11}, we deduce that $x(\tau_1)\geq x(\tau_2)$, i.e.
$x(\tau)$ is non-increasing.

 If $x(\tau)$ is not strictly decreasing, then there exist
 $x_1\in[\underline{X},\overline{X}]$ and
 $0\leq\tau_1<\tau_2\leq T-\delta$ such that $x(\tau)=x_1$ for any
 $\tau\in[\tau_1,\tau_2]$. See Figure 3 below.
Note that $\partial_xu(\tau,x_1)=0$ and, moreover,
$\partial_{\tau}\partial_{x}u(\tau,x_1)=0$ for any
$\tau\in[\tau_1,\tau_2]$.

On the other hand, we observe that in the domain
$[\tau_1,\tau_2]\times(x_1,x_1+1)$, $u(\cdot,\cdot)$ satisfies
\begin{eqnarray*}
  \left\{
 \begin{array}{ll}
 (\partial_{\tau}-\widetilde{{\cal L}})u(\tau,x)=\theta(x),
 &\mbox{for}\ (\tau, x)\in[\tau_1,\tau_2]\times(x_1,x_1+1);
 \vspace{2mm} \\
 u(\tau,x_1)=0,
 &\mbox{for}\ \tau\in[\tau_1,\tau_2].
 \end{array}
 \right.
\end{eqnarray*}
In turn, $\partial_{\tau}u(\cdot,\cdot)$ satisfies
\begin{eqnarray*}
  \left\{
 \begin{array}{ll}
 (\partial_{\tau}-\widetilde{{\cal L}})\partial_{\tau}u(\tau,x)=\partial_{\tau}\theta(x)= 0,
 &\mbox{for}\ (\tau, x)\in[\tau_1,\tau_2]\times(x_1,x_1+1);
 \vspace{2mm} \\
 \partial_{\tau}u(\tau,x_1)=0,
 &\mbox{for}\ \tau\in[\tau_1,\tau_2].
 \end{array}
 \right.
\end{eqnarray*}
For any $x_2>\overline{X}$, since $(\tau_2,x_2)\in\mathbf{CR}$, we
have $u(\tau_2,x_2)>0$, and $u(0,x_2)=0$. Hence, there exists
$\tau\in(0,\tau_2)$ such that $\p_\tau u(\tau,x_2)>0$. Note,
however, that $\partial_{\tau}u\geq 0$ (cf. (\ref{timederivative}))
and, therefore, the strong maximum principle (see \cite{Evans})
implies that $\p_\tau u>0$ in ${\bf CR}$.

Together with $\partial_{\tau}u(\tau,x_1)=0$ for any
$\tau\in[\tau_1,\tau_2]$, we deduce that
$\partial_x\partial_{\tau}u(\tau,x_1)>0$ using Hopf lemma (see
\cite{Evans}). But this is a contradiction to
$\partial_{\tau}\partial_{x}u(\tau,x_1)=0$ for any
$\tau\in[\tau_1,\tau_2]$.

(ii) It is obvious that $x(0)\leq \overline{X}$ from Lemma
\ref{Th3}, so it is sufficient to prove that $x(0)\geq
\overline{X}$. If not, in the domain
$(0,T-\delta]\times(x(0),\overline{X})\subset\mathbf{CR}$, we
consider
\begin{eqnarray*}
  \left\{
 \begin{array}{ll}
 (\partial_{\tau}-\widetilde{{\cal L}})u(\tau,x)=\theta(x)<0,
 &\mbox{for}\ (\tau, x)\in(0,T-\delta]\times(x(0),\overline{X});
 \vspace{2mm} \\
 u(0,x)=0,
 &\mbox{for}\ x\in(x(0),\overline{X}).
 \end{array}
 \right.
\end{eqnarray*}
Then,
$\partial_{\tau}u(0,x)=\widetilde{\cal{L}}u(0,x)+\theta(x)=\theta(x)<0$,
which is a contradiction to $\partial_{\tau}u\geq 0$ in
(\ref{timederivative}).

(iii) We first prove that $x(\tau)$ is continuous. If not, then
there exists $\tau_2\in(0,T-\delta)$ and $\underline{X}\leq
x_3<x_1\leq \overline{X}$ such that $x(\tau_2+0)=x_3$ and
$x(\tau_2-0)=x_1$. See Figure 3 below.

In the domain $(\tau_2,T-\delta]\times(x_3,x_1)\subset\mathbf{CR}$,
we consider
\begin{eqnarray*}
  \left\{
 \begin{array}{ll}
 (\partial_{\tau}-\widetilde{{\cal L}})u(\tau,x)=\theta(x)<0,
 &\mbox{for}\ (\tau, x)\in[\tau_2,T-\delta]\times(x_3,x_1);
 \vspace{2mm} \\
 u(\tau_2,x)=0,
 &\mbox{for}\ x\in(x_3,x_1).
 \end{array}
 \right.
\end{eqnarray*}
Then,
$\partial_{\tau}u(\tau_2,x)=\widetilde{\cal{L}}u(\tau_2,x)+\theta(x)=\theta(x)<0$,
which is a contradiction to $\partial_{\tau}u\geq 0$ in
(\ref{timederivative}).

Finally, since $\partial_{\tau}u\geq 0$, the smoothness of both the
optimal exercise boundary $x(\tau)$ and the value function
$u(\cdot,\cdot)$ in the continuation region follow along the similar
arguments used in \cite{Fr2}.
\end{proof}

\begin{picture}(0,130)(140,0)
 \put(240,10){\vector(1,0){140}} \put(360,6){\vector(0,1){100}}
 \put(363,100){$\tau$}\put(383,10){$x$}
 {\thicklines\qbezier(355,10)(340,15)(330,30)
 \put(300,40){\line(1,0){30}}
 \qbezier(300,40)(282,42)(275,50)
 \qbezier(275,50)(265,60)(260,92)}
 \multiput(300,40)(0,-5){6}{\line(0,-1){2}}
 \multiput(330,40)(0,-5){6}{\line(0,-1){2}}
 \put(328,7){$\bullet$}\put(328,-3){$x_1$}
 \put(252,7){$\bullet$}\put(250,-3){$\underline{X}$}
 {\thicklines\put(330,30){\line(0,1){10}}}
 \put(366,7){$\bullet$}\put(366,-3){$x_2$}
 \put(297,7){$\bullet$}\put(298,-3){$x_3$}
 \put(353,7){$\bullet$}\put(353,-3){$\overline{X}$}
 \multiput(330,40)(5,0){7}{\line(1,0){2}}
 \multiput(255,10)(0,5){17}{\line(0,1){2}}
 \multiput(330,30)(5,0){7}{\line(1,0){2}}
 \put(358,37){$\bullet$}\put(368,37){$\tau_2$}
 \put(358,27){$\bullet$}\put(368,27){$\tau_1$}
 \put(240,60){${\bf ER}$}\put(235,72){$u=0$}\put(321,60){${\bf CR}$}\put(317,72){$u>0$}
 \put(280,50){$x(\tau)$}
 \end{picture}
 \begin{center}
 \small{Figure 3: Non-strictly decreasing and discontinuous free boundary
 $x(\tau)$.}
 \end{center}

%
%
%

\subsection{Proof of Theorem \ref{main_theorem} (iii): Asymptotic behavior for large maturity}

We study the asymptotic behavior of the optimal exercise boundary
$x(\tau)$ and the value function $u(\tau,x)$ as
$\tau\rightarrow\infty$, which in turn proves Theorem \ref{main_theorem} (iii) for the asymptotic behavior of $X^{\delta}(t)$ when $T\rightarrow \infty$.

To this end, we consider the auxiliary optimal stopping time problem
perturbed by $r\ep$,
\begin{equation}\label{optimal_stopping_delay_special_5.1}
U^{\epsilon}_{\infty}(X_t)=\esssup_{\tau^{0}\geq
t}\mathbf{E}\left[\int_t^{\tau^0}e^{-r(s-t)}(\Theta(X_s)-r\epsilon)ds|\mathcal{F}_t\right],
\end{equation}
for any $\mathbb{F}$-stopping time $\tau^{0}\geq t$ and any
$\epsilon\geq 0$. This will help us to achieve the lower bound and,
therefore, the asymptotic behavior of the optimal exercise boundary
$x(\tau)$.

Following along the similar arguments used in section
\ref{sec:perpetual}, we obtain that
$u^{\epsilon}_{\infty}(x)=U^{\epsilon}_{\infty}(X)$, where $x=\ln
X-\ln K$, and $u^{\ep}(\cdot)$ is the unique strong solution to the
stationary variational inequality \be\label{IVI.1}
  \left\{
 \begin{array}{ll}
 -\widetilde{{\cal L}}\,u^\ep_\infty(x)=\theta(x)-r\ep,
 &\mbox{if}\;u^\ep_\infty(x)>0,\;\mbox{for}\;x\in\mathbb{R};
 \vspace{2mm} \\
 -\widetilde{{\cal L}}\,u^\ep_\infty(x)\geq \theta(x)-r\ep,
 &\mbox{if}\;u^\ep_\infty(x)=0,\;\mbox{for}\;x\in\mathbb{R},
 \end{array}
 \right.
\ee with $u^\ep_\infty\in W^2_{p,loc}(\mathbb{R})$ for
 $p\geq1$ and $(u^\ep_\infty)^\prime\in C(\mathbb{R})$.

In contrast to variational inequality (\ref{IVI}), it is not clear
how to reduce variational inequality (\ref{IVI.1}) to a
free-boundary problem, and to obtain its explicit solution.
Nevertheless, we are able to derive a local version of the
free-boundary problem with $\epsilon>0$ small enough, which is
sufficient to obtain the asymptotic behavior of the optimal exercise
boundary later on.

\begin{lemma}\label{Pro5}
 For $\ep>0$ small enough, it holds that $u_{\infty}(x)\geq u^\epsilon_\infty(x)\geq
 u_\infty(x)-\epsilon$. Define $\underline{X}^{\epsilon}$ as
 \begin{equation*}
\underline{X}^{\epsilon}=\inf\{x\in(-\infty,\overline{X}]:
u^{\epsilon}_{\infty}(x)>0\}.
\end{equation*}
Then $\underline{X}\leq \underline{X}^{\epsilon}< \overline{X}$, and
$u^\epsilon_\infty(x)>0$ for any
$x\in(\underline{X}^{\epsilon},\overline{X})$, where $\underline{X}$
and $\overline{X}$ are given in (\ref{freeboundary_0}) and
Proposition \ref{Pro3}, respectively. Moreover,
$\underline{X}^\epsilon\rightarrow \underline{X}$ as
$\ep\rightarrow{0^+}$. {See Figure 4 below.}

\vspace{-1cm}

\begin{picture}(0,130)(140,0)
\put(150,20){\vector(1,0){200}}\put(350,10){$x$}
\put(300,65){$\theta(x)$}
{\thicklines\qbezier(170,2)(220,24)(260,60)}
{\thicklines\qbezier(260,60)(280,75)(295,65)}
{\thicklines\qbezier(295,65)(320,46)(355,39)}
\put(204,17){$\bullet$}\put(204,7){$\overline{X}$}
\put(185,17){$\bullet$}\put(185,25){$\underline{X}^\epsilon$}
\put(155,17){$\bullet$}\put(155,25){$\underline{X}$}
\end{picture}
\begin{center}
\small{Figure 4: The graph of $\underline{X}$,
$\underline{X}^{\epsilon}$ and $\overline{X}$.}
\end{center}
\end{lemma}

\begin{proof} Note that the running payoff in the optimal stopping problem
(\ref{optimal_stopping_delay_special_5.1}) satisfies
\begin{align*}
\int_t^{\tau^0}e^{-r(s-t)}\Theta(X_s)ds&\geq
\int_t^{\tau^0}e^{-r(s-t)}(\Theta(X_s)-r\epsilon)ds\\
&=\int_t^{\tau^0}e^{-r(s-t)}\Theta(X_s)ds+\epsilon
e^{-r(\tau^0-t)}-\epsilon \\
&\geq \int_t^{\tau^0}e^{-r(s-t)}\Theta(X_s)ds-\epsilon,
\end{align*}
 for any $\mathbb{F}$-stopping time
$\tau^0\geq t$. It follows that $u_{\infty}(x)\geq
u^\epsilon_\infty(x)\geq u_\infty(x)-\epsilon$.

Since $u_{\infty}(x)>0$ for $x>\underline{X}$, and
$\overline{X}>\underline{X}$ by Proposition \ref{le1}, it holds that
$u_{\infty}(\overline{X})>0$. Let $\epsilon>0$ be small enough such
that $\epsilon<u_{\infty}(\overline{X})$. Using the inequality
$u^{\epsilon}_{\infty}(x)\geq u_{\infty}(x)-\epsilon$, we obtain
that
$$u_{\infty}^{\epsilon}(\overline{X})\geq
u_{\infty}(\overline{X})-\epsilon>0.$$ In turn, the definition of
$\underline{X}^{\epsilon}$ and the continuity of
$u^{\epsilon}_{\infty}(\cdot)$ imply that
$\underline{X}^{\epsilon}<\overline{X}$.

Repeating the similar arguments used in Proposition \ref{le1}, we
obtain that $u^{\epsilon}_\infty(x)>0$ for
$x\in(\underline{X}^{\epsilon},\overline{X})$. Furthermore, the
inequality $u_{\infty}(x)\geq u^\epsilon_\infty(x)$ and Proposition
\ref{le1} imply that
$$
0=u_{\infty}(\underline{X})\geq u_{\infty}^{\ep}(\underline{X}).
$$
In turn, the definition of $\underline{X}^{\epsilon}$ implies that
$\underline{X}^{\epsilon}\geq \underline{X}$.

{Next, we prove that
$\underline{X}^{\epsilon}\rightarrow\underline{X}$ as
$\epsilon\rightarrow0^+$. In fact, from the definition of
$\underline{X}^\epsilon$ and the continuity of $u^\epsilon_\infty$,
we know that $u^\epsilon_\infty(\underline{X}^\epsilon)=0$. Using
the inequality $u^{\epsilon}_{\infty}(x)\geq u_{\infty}(x)-\epsilon$
again, we obtain $u_{\infty}(\underline{X}^\epsilon)\leq\epsilon$.}

{On the other hand, applying Taylor's expansion to $u_{\infty}(x)$
yields
$$u_{\infty}(x)=\frac12u_{\infty}''(\underline{X}+0)(x-\underline{X})^2(1+o(1))=\frac{-\theta(\underline{X})}{\sigma^2}
(x-\underline{X})^2(1+o(1)),$$ which further implies that
$u_{\infty}(x)>\kappa (x-\underline{X})^2$ with some positive
constant $\kappa$ if $x$ is close enough to $\underline{X}$.
Moreover, since $u_{\infty}(x)>0$ in the interval
$(\underline{X},\overline{X}\,]$ and is continuous, we deduce that
if $\epsilon$ is small enough, then $\underline{X}^\epsilon\leq
\underline{X}+\sqrt{\epsilon/\kappa}$. Recalling
$\underline{X}^\epsilon\geq \underline{X}$, we conclude that
$\underline{X}^{\epsilon}\rightarrow\underline{X}$ as
$\epsilon\rightarrow0^+$.}
\end{proof}

We return to the proof of Theorem \ref{main_theorem} (iii), which is equivalent to the following proposition.

\begin{proposition}
\label{theorem} Let $u(\cdot,\cdot)$ and $x(\tau)$ be the solution
to variational inequality (\ref{VI2}) and its associated optimal
exercise boundary (\ref{freeboundary_10}), respectively. Then,
$$u(\tau,\cdot)\rightarrow u_\infty(\cdot)\ \ \text{and}\ \ x(\tau)\rightarrow
\underline{X},$$
 as $\tau\rightarrow\infty$, where $u_\infty(\cdot)$ and $\underline{X}$ are
 the solution of the stationary
 variational inequality (\ref{IVI}) and its associated optimal exercise boundary (\ref{freeboundary_0}), respectively,
\end{proposition}

\begin{proof}
From the optimal stopping problems
(\ref{optimal_stopping_delay_special_4}) and
(\ref{optimal_stopping_delay_special_5}), it is immediate that
$u(\cdot,\cdot)\leq u_{\infty}(\cdot)$. For $t\leq (T-\delta)/2$,
define
$$
 {u}^t(\tau,x)=u^{\exp\{-rt\}}_\infty(x)-e^{-r(\tau-t)}+e^{-rt},
$$
 where ${u}^{\exp\{-rt\}}_\infty(\cdot)$ is the solution of variational inequality~\eqref{IVI.1}
 with $\ep=\exp\{-rt\}$. It is routine to check that
 ${u}^t\in W^{2,\,1}_{p,\,loc}({\cal N}_{2t})\cap C(\overline{{\cal
 N}_{2t}}\,)$, and satisfies
\bee
  \left\{
 \begin{array}{ll}
 (\p_{\tau}-\widetilde{{\cal L}}){u}^t(\tau,x)=\theta(x),
 \ \ \ \mbox{if}\;{u}^t(\tau,x)>-e^{-r(\tau-t)}+e^{-rt},\;\mbox{for}\;(\tau,x)\in{\cal N}_{2t};
 \vspace{2mm} \\
 (\p_{\tau} -\widetilde{{\cal L}}){u}^t(\tau,x)\geq \theta(x),\ \ \
 \mbox{if}\;{u}^t(\tau,x)=-e^{-r(\tau-t)}+e^{-rt},\;\mbox{for}\;(\tau,x)\in{\cal N}_{2t};
 \vspace{2mm} \\
 {u}^t(0,x)={u}^{\exp\{-rt\}}_\infty(x)-e^{rt}+e^{-rt}<0,\ \ \
 \mbox{for}\ x\in\mathbb{R},
 \end{array}
 \right.
\eee
 provided that $t$ and $T$ are large enough. Since the obstacle
 $-e^{-r(\tau-t)}+e^{-rt}\leq0$ in the domain ${\cal N}_{2t}$, using the
 comparison principle (see
\cite{Friedman2} or \cite{Yan1}) for variational inequality
(\ref{VI2}) in the domain ${\cal N}_{2t}$, we
 deduce that
$
 u(\tau,x)\geq{u}^t(\tau,x)$ for $(\tau,x)\in{\cal N}_{2t}$. In turn, Lemma
 \ref{Pro5} implies that
\begin{equation}\label{inequ3}
 u(2t,\cdot)\geq {u}^t(2t,\cdot)={u}^{\exp\{-rt\}}_\infty(\cdot)
 \geq {u}_\infty(\cdot)-e^{-rt}.
\end{equation}
Together with $u(2t,\cdot)\leq u_{\infty}(\cdot)$, we obtain that
$u(2t,\cdot)\rightarrow u_{\infty}(\cdot)$ as $t\rightarrow\infty$.

To prove the convergence of the optimal exercise boundary $x(\tau)$
to $\underline{X}$, we choose $t$ large enough such that
$\underline{X}^{\exp\{-rt\}}+\exp\{-rt\}<\overline{X}$. Then,
(\ref{inequ3}) yields that
$$
 u\left(2t,\underline{X}^{\exp\{-rt\}}+\exp\{-rt\}\right)\geq{u}^{\exp\{-rt\}}_\infty\left(\underline{X}^{\exp\{-rt\}}+\exp\{-rt\}\right)>0,$$
where we have used ${u}^{\exp\{-rt\}}_\infty\left(x\right)>0$ for
$x\in(\underline{X}^{\exp\{-rt\}},\overline{X})$ (cf. Lemma
\ref{Pro5}) in the second inequality. It then follows from the
definition of $x(\tau)$ in (\ref{freeboundary_10}) that $$x(2t)\leq
\underline{X}^{\exp\{-rt\}}+\exp\{-rt\}.$$ By Lemma \ref{Th3}, we
also have $x(\tau)\geq
 \underline{X}$ for any $\tau\in[\,0,T-\delta\,].
$ Hence, we have proved that $$\underline{X}\leq x(2t)\leq
 \underline{X}^{\exp\{-rt\}}+\exp\{-rt\}.
$$
Finally, we send $t\rightarrow\infty$ in the above inequalities, and
conclude the convergence of $x(2t)$ to $\underline{X}$ by
Lemma \ref{Pro5}.
\end{proof}

\begin{remark} Under the original coordinates $(t,X)$, it follows
from the relationship (\ref{relation}) and Proposition \ref{theorem}
that $X^{\delta}(t)\rightarrow K e^{\underline{X}}$ as
$T\rightarrow\infty$, so $Ke^{\underline{X}}$ is the asymptotic line
of the optimal exercise boundary $X^{\delta}(t)$.

Proposition \ref{theorem} also establishes the connection between the
optimal stopping problems (\ref{optimal_stopping_delay_special_4})
and (\ref{optimal_stopping_delay_special_5}):
$U^{\delta}(t,X)\rightarrow U^{\delta}_{\infty}(X)$ uniformly in
$X\in\mathbb{R}_+$ as $T\rightarrow\infty$. Moreover, it follows
from the decomposition formula (\ref{Decomposition}) that the value
function of the American put option with time lag $\delta$ has the
long maturity limit: $V^{\delta}(t,X)\rightarrow
P(T-\delta,X)+U_{\infty}^{\delta}(X)$ uniformly in
$X\in\mathbb{R}_+$ as $T\rightarrow\infty$.
\end{remark}

\subsection{Proof of Theorem \ref{main_theorem} (iii): Asymptotic behavior for small time lag}

Finally, we prove Theorem \ref{main_theorem} (iii) for the asymptotic behavior of $X^{\delta}(t)$ when $\delta\rightarrow 0$. Recall that $X^0(t)$ denotes the optimal exercise boundary of the
corresponding standard American put option. It is well known that
$X^{0}(t)$ is a strictly increasing and smooth function with
$X^0(T)=K$. We refer to \cite{Chen} and \cite{Jiang} for its proof.

We first extend variational inequality (\ref{VI11}) from
$\Omega_{T-\delta}$ to $\Omega_T$ by defining
$V^{\delta}(t,X)=P(t,X)$ for
$(t,X)\in[T-\delta,T]\times\mathbb{R}_+$, and rewrite (\ref{VI11})
as
\begin{align}\label{PDE}
(-\p_t -{\cal L})V^\delta
(t,X)&=I_{\{V^\delta=P(T-\delta,X)\}}(-\p_t-{\cal L})P(T-\delta,X)\notag\\
&=-I_{\{V^\delta=P(T-\delta,X)\}}\Theta(X),
\end{align}
for $(t,X)\in\Omega_{T}$, and $V^{\delta}(T,X)=(K-X)^+$ for
$X\in\mathbb{R}^+$.


Denote $\mathcal{N}_T^{n}:=(0,T]\times\mathcal{N}^n$ and
$\mathcal{N}^n:=(-n,K-\frac1n)$. Then, we apply the
$W^{2,1}_p$-estimates (see Lemma A.4 in \cite{Yan1} for example) to
the above PDE (\ref{PDE}) for $V^{\delta}(\cdot,\cdot)$, and obtain
that for any $n\in\mathbb{N}$, \be\label{estimate}
 \|V^\delta\|_{W^{2,1}_p(\mathcal{N}_T^{n})}\leq
 C\Big(\,\|V^\delta\|_{L^p(\mathcal{N}_T^{2n})}+\|\Theta\|_{L^p(\mathcal{N}^{2n})}+\|K-X\|_{W^{2,1}_p(\mathcal{N}^{2n})}
\Big). \ee Note that the right hand side of the above inequality is
independent of $\delta$ due to the fact that $V^{0}(t,X)-\textcolor{red}{K(1-e^{-r\delta})}\leq V^{\delta}(t,X)\leq V^{0}(t,X)$ (cf. (\ref{bound})), and the
formula (\ref{freeterm}) for $\Theta(X)$.



From Theorem \ref{main_theorem} (i), $V^\delta$ converges to $V^0$ in
$C(\overline{\Omega_T})$ as $\delta\rightarrow 0$.  Hence, the above
estimate~\eqref{estimate} implies that $V^\delta$ also converges
weakly to $V^0$ in $W^{2,1}_p(\mathcal{N}^n_T)$ and
\begin{equation*}
 -I_{\{V^\delta=P(T-\delta,X)\}}\Theta(X)=(-\p_t -{\cal L})V^{\delta}(t,X)\rightharpoonup
(-\p_t-{\cal
 L})V^0(t,X)
\end{equation*}
 weakly in $L^p(\mathcal{N}^n_T)$ as $\delta\rightarrow 0$. But note
 that $$(-\p_t-{\cal
 L})V^0(t,X)=I_{\{V^0=K-X\}}(rK-qX).$$ In turn,
\begin{equation}\label{convergence}
-I_{\{V^\delta=P(T-\delta,X)\}}\Theta(X)\rightharpoonup
I_{\{V^0=K-X\}}(qX-rK)
\end{equation}
weakly in $L^p(\mathcal{N}^n_T)$.

Now suppose that $X^{\delta}(t)$ does not converge to $X^{0}(t)$.
Then there exist $t_0\in[0,T)$ and a sequence
$\{X^{\delta_{m}}\}_{m=1}^{\infty}$ such that when
$\delta_m\rightarrow 0$, $X^{\delta_{m}}(t_0)$ does not converge to
$X^{0}(t_0)$.

Since $X^{0}(t)$ is continuous and strictly increasing with
$X^{0}(T)=K$, we may assume there exists $\epsilon>0$ and an integer
$M$ such that $X^{0}(t_0)+2\epsilon<\min\{X^{\delta_m}(t_0),K\}$ for
any $m\geq M$. See Figure 5 below. Other cases can be treated in a
similar way.

By the continuity and strictly increasing property of both
$X^{0}(t)$ and $X^{\delta}(t)$, we can find $\eta>0$ such that the
compact set $[t_{0},t_0+\eta]\times
[X^0(t_0)+\epsilon,X^0(t_0)+2\epsilon]$ is in the exercise region of
$V^{\delta_m}$ and the continuation region of $V^0$. Therefore, in
this compact set,
$V^{\delta_m}(t,X)=P(T-\delta_m,X),\,V^0(t,X)>K-X$, and
\begin{align*}
&-I_{\{V^{\delta_m}=P(T-\delta_m,X)\}}\Theta(T-\delta_m,X)-I_{\{V^0=K-X\}}(qX-rK)\\
=&-\Theta(T-\delta_m,X),
\end{align*}
where we use the notation $\Theta(T-\delta_m,\cdot)$ to emphasize
its dependence on $T-\delta_m$. However, from Proposition
\ref{Pro3}, it is immediate to check that
$$
 \lim_{\delta_m\rightarrow
 0}\Theta(T-\delta_m,X)=qX-rK<0,{\;\text{for}\;X<K,}
$$
 which is a contradiction to (\ref{convergence}).

\begin{picture}(0,130)(140,0)
\put(190,10){\vector(1,0){150}} \put(200,6){\vector(0,1){110}}
\put(195,115){$t$}\put(343,10){$X$}
{\thicklines\qbezier(225,10)(226,60)(310,90)}
{\thicklines\qbezier(230,10)(235,85)(312,105)}
\put(200,90){\line(1,0){10}}\put(215,90){\line(1,0){10}}\put(230,90){\line(1,0){10}}
\put(245,90){\line(1,0){10}}\put(260,90){\line(1,0){10}}\put(275,90){\line(1,0){10}}
\put(290,90){\line(1,0){10}}\put(305,90){\line(1,0){10}}\put(310,90){\line(1,0){8}}
\put(325,90){\line(1,0){10}}
\put(198,87){$\bullet$}\put(174,84){$T-\delta$}
\put(200,105){\line(1,0){10}}\put(215,105){\line(1,0){10}}\put(230,105){\line(1,0){10}}
\put(245,105){\line(1,0){10}}\put(260,105){\line(1,0){10}}\put(275,105){\line(1,0){10}}
\put(290,105){\line(1,0){10}}\put(305,105){\line(1,0){10}}\put(310,105){\line(1,0){8}}
\put(325,105){\line(1,0){10}}
 \put(200,75){\line(1,0){10}}\put(215,75){\line(1,0){10}} \put(230,75){\line(1,0){10}}
 \put(245,75){\line(1,0){5}}\put(255,75){\line(1,0){20}}
 \put(267,82){\line(1,0){6}}\put(267,75){\line(0,1){7}}\put(273,75){\line(0,1){7}}
\put(198,102){$\bullet$}\put(182,101){$T$}
\put(198,72){$\bullet$}\put(190,71){$t_0$}
 \put(210,35){${\bf ER}$}\put(320,35){${\bf CR}$}
 \put(260,56){${X^\delta(t)}$}\put(218,56){${X^0(t)}$}
 \put(308,87){$\bullet$}\put(300,92){$K e^{\overline{X}}$}
 \put(308,102){$\bullet$}\put(300,107){$K$}
\end{picture}
\begin{center}
 \small{Figure 5: Non-convergence of the free boundaries $X^{\delta}(t)$ to
 $X^{0}(t)$ as $\delta\rightarrow 0$.}
 \end{center}

\section{Conclusion}

This paper studies the asymptotic behavior of the value function and
the optimal exercise boundary of American put options with delivery
lags through free boundary techniques. On one hand, it would be
interesting to carry out the free boundary analysis to the real
option setup such as reversible investment (\cite{Bar-ILan2 },
\cite{Bar-ILan}, \cite{Cost}), impulse control (\cite{Erhan},
\cite{Pham}, \cite{Oksendal2}), and recursive optimal stopping
(\cite{TA0}, \cite{TA2}). \textcolor[rgb]{1.00,0.00,0.00}{On the
other hand, it might be possible to prove the convexity of the
optimal exercise boundary (as in \cite{Jiang_1} and \cite{Ekstrom}
for the standard American put case).} Such extensions are left for
the future research.

\appendix

\section{Proof of Proposition \ref{Pro3}}\label{appendix:2}
{(i)} We first show that the function $\theta(x)$ (or equivalently,
$\Theta(X)$ with $X=Ke^{x}$) has the explicit form
\be\label{freeterm}
 \theta(x)=qKe^{x-q\delta}N(-d_1)+{\sigma K\over 2\sqrt{\delta}}\,e^{-r\delta}N^{\prime}(-d_2)-rKe^{-r\delta}N(-d_2),
\ee
 where $N(d)={1\over\sqrt{2\pi}}\int_{-\infty}^d
 e^{-\xi^2\over2}d\xi$,\
 $N^{\prime}(d)={1\over\sqrt{2\pi}}e^{\frac{-d^2}{2}}$, and
\begin{equation}\label{defofN}
 d_1={x\over\sigma\sqrt{\delta}}
 +\left(\,{r-q\over\sigma}+{\sigma\over2}\right)\,\sqrt{\delta},\qquad
 d_2=d_1-\sigma\sqrt{\delta}.
\end{equation}

To this end, let $x=\ln X-\ln K$ and $p(t,x)=P(t,X)$. It is well
known that $p(t,x)$ has the explicit expression (see \cite{Jiang}
for example)
\begin{equation}\label{price_european}
 p(t,x)=Ke^{-r(T-t)}N(-{d}_2^{t})-Ke^{x-q(T-t)}N(-{d}_1^t),
\end{equation}
 where $d^t_1$ and $d_t^{2}$ are the same as $d_1$ and $d_2$ in (\ref{defofN}) except that $\delta$ is
 replaced $T-t$:
$$
 {d}_1^t={x\over\sigma\sqrt{T-t}}
 +\left(\,{r-q\over\sigma}+{\sigma\over2}\right)\,\sqrt{T-t},\quad
 {d}_2^t={d}_1^t-\sigma\sqrt{T-t}.
$$

Differentiating $p(t,x)$ against $t$ yields that
\begin{align*}\nonumber
 \p_tp(t,x)
 =&\ rKe^{-r(T-t)}N(-{d}_2^t)-qKe^{x-q(T-t)}N(-{d}_1^t)
 \\[2mm]\nonumber
 &\ -Ke^{-r(T-t)}N^{\prime}(-{d}_2^t)\left(\p_t{d}_1^t+{\sigma\over
 2\sqrt{T-t}}\right)+Ke^{x-q(T-t)}N^{\prime}(-{d}_1^t)\,\p_t{d}_1^t
 \\[2mm]\label{pderivative}
 =&\ rKe^{-r(T-t)}N(-{d}_2^t)-qKe^{x-q(T-t)}N(-{d}_1^t)\notag\\
 &-{\sigma K\over 2\sqrt{T-t}}\,e^{-r(T-t)}N^{\prime}(-{d}_2^t),
\end{align*}
 where
we have used the fact that
\begin{equation}\label{equ1}
 e^{-r(T-t)}N^{\prime}(-{d}_2^t)=e^{x-q(T-t)}N^{\prime}(-{d}_1^t).
\end{equation}
Thus we have proved (\ref{freeterm}).

To prove Proposition \ref{Pro3} (i), we use the following two
elementary inequalities: For $d\geq 0$,
\begin{align}
N(-d)&\textcolor{red}{<}
\frac{1}{\sqrt{2\pi}}\int_{-\infty}^{-d}e^{-\frac{\xi^2}{2}}\frac{\xi}{-d}d\xi=\frac{1}{\sqrt{2\pi}d}e^{-\frac{d^2}{2}}=\frac{1}{d}N^{\prime}(-d);\label{inequ1}\\
N(-d)&\textcolor{red}{>}
\frac{1}{\sqrt{2\pi}}\int_{-\infty}^{-d}e^{-\frac{\xi^2}{2}}\frac{1+\frac{1}{\xi^2}}{1+\frac{1}{d^2}}d\xi=\frac{1}{\sqrt{2\pi}(d+\frac{1}{d})}e^{-\frac{d^2}{2}}=\frac{1}{d+\frac{1}{d}}N^{\prime}(-d)\label{inequ2}.
\end{align}

We first show that there exists $\overline{X}$ such that
$\theta(\overline{X})=0$. Note that this is equivalent to show that
${\theta(\overline{X})}/{N^{\prime}(-\overline{d}_2)}=0$, where
$\overline{d}_2$ is the same as $d_2$ in (\ref{defofN}) except that
$x$ is replaced by $\overline{X}$.

For $x$ large enough such that $d_1,d_2\geq 0$ (cf. (\ref{defofN})),
we have
\begin{align*}
\frac{\theta(x)}{N^{\prime}(-d_2)}&=
 qKe^{x-q\delta}\frac{N(-d_1)}{N^{\prime}(-d_2)}
 +\frac{\sigma K}{2\sqrt{\delta}}e^{-r\delta}
 -rKe^{-r\delta}\frac{N(-d_2)}{N^{\prime}(-d_2)}\\
 &\geq
 qKe^{x-q\delta}\frac{1}{d_1+\frac{1}{d_1}}\frac{N^{\prime}(-d_1)}{N^{\prime}(-d_2)}
 +\frac{\sigma K}{2\sqrt{\delta}}e^{-r\delta}
 -rKe^{-r\delta}\frac{1}{d_2}\frac{N^{\prime}(-d_2)}{N^{\prime}(-d_2)},
\end{align*}
by using the inequalities (\ref{inequ1}) and (\ref{inequ2}). From
(\ref{equ1}), we further obtain that
\begin{equation*}
\frac{\theta(x)}{N^{\prime}(-d_2)}\geq
  \frac{qK}{d_1+\frac{1}{d_1}}e^{-r\delta}+
  \frac{\sigma K}{2\sqrt{\delta}}e^{-r\delta}-
  rKe^{-r\delta}\frac{1}{d_2}>0,
\end{equation*}
provided that $d_2\geq 2r\sqrt{\delta}/\sigma$, so
$\frac{\theta(x)}{N^{\prime}(-d_2)}>0$ for large enough $x$.

On the other hand, when $x\rightarrow-\infty$, we have that
$d_1,d_2\rightarrow-\infty$ and, therefore,
$$N(-d_1),\ N(-d_2)\rightarrow 1,\ \text{and}\ N^{\prime}(-d_2)\rightarrow 0.$$
Hence, $\theta(x)\rightarrow-rKe^{-r\delta}<0$. This means that
$\theta(x)$ is negative provided $x$ is small enough, so
$\frac{\theta(x)}{N^{\prime}(-d_2)}<0$ for small enough $x$. Since
$\frac{\theta(x)}{N^{\prime}(-d_2)}$ is obviously continuous in $x$,
we conclude that there exists $\overline{X}\in\mathbb{R}$ such that
${\theta(\overline{X})}/{N^{\prime}(-\overline{d}_2)}=0$.

Next, we show that $\frac{\theta(x)}{N^{\prime}(-d_2)}$ is strictly
increasing in $x$, so its zero crossing point $\overline{X}$ is
unique. Indeed, note that
\begin{equation*}
\left(\frac{\theta(x)}{N^{\prime}(-d_2)}\right)^{\prime}=\frac{Ke^{-r\delta}}{\sigma\sqrt{\delta}}
\left[r-q+q\frac{N(-d_1)d_1}{N^{\prime}(-d_1)}-r\frac{N(-d_2)d_2}{N^{\prime}(-d_2)}\right].
\end{equation*}
Let $h(d):=\frac{N(-d)d}{N^{\prime}(-d)}$. Then, we calculate its
derivative against $d$ as
\begin{equation*}
h^{\prime}(d)=-d+\frac{N(-d)}{N^{\prime}(-d)}+\frac{N(-d)}{N^{\prime}(-d)}d^2.
\end{equation*}
It is obvious that $h^{\prime}(d)>0$ when $d\leq 0$. For $d>0$, by
using the inequalities (\ref{inequ1}) and (\ref{inequ2}), we obtain
that
\begin{equation*} h^{\prime}(d)\textcolor{red}{>}
-d+\frac{1}{d+\frac{1}{d}}+\frac{1}{d+\frac{1}{d}}d^2=0.
\end{equation*}
In turn,  $h(d_2)\textcolor{red}{<}h(d_1)$, which yields that
\begin{equation*}
\left(\frac{\theta(x)}{N^{\prime}(-d_2)}\right)^{\prime}\textcolor{red}{>}
\frac{Ke^{-r\delta}}{\sigma\sqrt{\delta}}
\left[r-q+(q-r)h(\textcolor{red}{d_1})\right]\textcolor{red}{\geq}0
\end{equation*}
by noting that $h(d_2)<\lim_{d\rightarrow\infty}h(d)=1$ and $r>q$.

{(ii) For any $x<0$, since $\delta\rightarrow0^+$,
$d_1,d_2\rightarrow-\infty$, and, therefore,
$$N(-d_1),\ N(-d_2)\rightarrow 1,\ \text{and}\ N^{\prime}(-d_2)\rightarrow 0,\ \theta(x)\rightarrow qKe^x-rK. $$}

\end{document}